\newcommand\vldbdoi{XX.XX/XXX.XX}
\newcommand\vldbpages{XXX-XXX}
\newcommand\vldbvolume{14}
\newcommand\vldbissue{1}
\newcommand\vldbyear{2020}
\newcommand\vldbauthors{\authors}
\newcommand\vldbtitle{\shorttitle} 
\newcommand{\revnote}[1]{%
  \marginnote{\tiny\bfseries\color{blue!70!black}#1}%
}
\newcommand\vldbavailabilityurl{URL_TO_YOUR_ARTIFACTS}
\newcommand\vldbpagestyle{plain} 
\newcommand{\rev}[1]{\textcolor{blue}{#1}}
\theoremstyle{definition}
\newtheorem{definition}{Definition}
\theoremstyle{plain}
\newtheorem{theorem}{Theorem}
\newtheorem{lemma}{Lemma}
\theoremstyle{remark}
\begin{document}
\title{Sparse Neighborhood Graph-Based Approximate Nearest Neighbor Search Revisited: Theoretical Analysis and Optimization}

\author{Xinran Ma}
\authornote{Both authors contributed equally to this work.}
\affiliation{%
  \institution{Academy of Mathematics and Systems Science, Chinese Academy of Sciences}
}
\email{maxinran22@mails.ucas.ac.cn}

\author{Zhaoqi Zhou}
\authornotemark[1]
\orcid{}
\affiliation{%
  \institution{Huawei Technologies Co., Ltd.}
}
\email{zhouzhaoqi1@huawei.com}

\author{Chuan Zhou}
\orcid{}
\affiliation{%
  \institution{Academy of Mathematics and Systems Science, Chinese Academy of Sciences}
}
\email{zhouchuan@amss.ac.cn}

\author{Zaijiu Shang}
\affiliation{%
  \institution{Shanghai Institute for Mathematics and Interdisciplinary Sciences}
}
\email{zaijiu@simis.cn}

\author{Guoliang Li}
\affiliation{%
  \institution{Tsinghua University}
}
\email{liguoliang@tsinghua.edu.cn}

\author{Zhiming Ma}
\affiliation{%
  \institution{Academy of Mathematics and Systems Science, Chinese Academy of Sciences}
}
\email{mazm@amt.ac.cn}





\begin{abstract}
 Graph-based approaches to approximate nearest neighbor search (ANNS) enable fast, high-recall retrieval on billion-scale vector datasets. Among them, the Sparse Neighborhood Graph (SNG) is widely used due to its strong search performance. However, the lack of theoretical understanding of SNG leads to expensive tuning of the truncation parameter that controls graph sparsification. In this work, we present OPT-SNG, a principled framework for analyzing and optimizing SNG construction. We introduce a martingale-based model of the pruning process that characterizes the stochastic evolution of candidate sets during graph construction. Using this framework, we prove that SNG has a maximum out-degree of \(O(n^{2/3+\epsilon})\), where \(\epsilon>0\) is an arbitrarily small constant, and an expected search path length of \(O(\log n)\). Building on these insights, we derive a closed-form rule for selecting the optimal truncation parameter \(R\), thereby eliminating the need for costly parameter sweeping. Extensive experiments on real-world datasets demonstrate that OPT-SNG achieves an average \(5.9\times\) speedup in index construction time, with peak improvements reaching \(15.4\times\), while consistently maintaining or improving search performance.
\end{abstract}

\maketitle

\pagestyle{\vldbpagestyle}
\begingroup\small\noindent\raggedright\textbf{PVLDB Reference Format:}\\
\vldbauthors. \vldbtitle. PVLDB, \vldbvolume(\vldbissue): \vldbpages, \vldbyear.\\
\href{https://doi.org/\vldbdoi}{doi:\vldbdoi}
\endgroup
\begingroup
\renewcommand\thefootnote{}\footnote{\noindent
This work is licensed under the Creative Commons BY-NC-ND 4.0 International License. Visit \url{https://creativecommons.org/licenses/by-nc-nd/4.0/} to view a copy of this license. For any use beyond those covered by this license, obtain permission by emailing \href{mailto:info@vldb.org}{info@vldb.org}. Copyright is held by the owner/author(s). Publication rights licensed to the VLDB Endowment. \\
\raggedright Proceedings of the VLDB Endowment, Vol. \vldbvolume, No. \vldbissue\ %
ISSN 2150-8097. \\
\href{https://doi.org/\vldbdoi}{doi:\vldbdoi} \\
}\addtocounter{footnote}{-1}\endgroup

\ifdefempty{\vldbavailabilityurl}{}{
\vspace{.3cm}
\begingroup\small\noindent\raggedright\textbf{PVLDB Artifact Availability:}\\
The source code, data, and/or other artifacts have been made available at \url{https://github.com/maxila320/OPT-SNG}.
\endgroup
}
\section{Introduction}
 Nearest Neighbor Search (NNS) is a fundamental problem in computer science~\cite{KNN1,KNN2,knn3,KNN4,knn5,knn6} and has become a fundamental research topic across various fields such as information retrieval~\cite{IRTV,image,retr2}, data mining~\cite{Dataminingsig2017,datamining2,datamining3}, and large language models~\cite{RAG,RAG2,ICDEML}. Formally, given a dataset \(P = \{p_1, \ldots, p_n\} \subset \mathbb{R}^d\) of cardinality \(n\) and query point \(q \in \mathbb{R}^d\), the NNS problem requires finding the \(k\) closest points to \(q\) under an appropriate distance metric \(\|\cdot\|\). The computational complexity of exact NNS grows prohibitively with both dimensionality \(d\) due to the well-documented curse of dimensionality~\cite{Curse,curse2,curse3,curse4,curse5} and the expanding dataset size \(n\). This fundamental limitation has motivated the development of Approximate Nearest Neighbor Search (ANNS) algorithms, which achieve significant efficiency improvements while sacrificing minimal accuracy.

 Modern ANNS approaches achieve remarkable search performance and fall into four categories: tree-based approaches~\cite{tree1,KD-TREE}, hashing-based approaches~\cite{LSHarshing,hash2,hash3,hash4}, vector quantization-based approaches~\cite{quantti1,RabitQ}, and graph-based approaches~\cite{Graph4, Graph3, Graph1,UNI1}. Among these methods, graph-based approaches have demonstrated particular success in practical applications~\cite{Graph2024VLDB, Graph1}, as they effectively capture and exploit local neighborhood structures to enable navigation in high-dimensional spaces. A crucial observation is that despite the diversity of graph construction strategies—which vary in neighbor selection and edge formation rules—most state-of-the-art graph-based ANNS systems share a unified query processing paradigm. The graph-based ANNS framework operates through two complementary phases: (1) an index construction phase that builds a proximity graph preserving neighborhood relationships through various graph refinement strategies, and (2) a query processing phase that employs the \textit{GreedySearch} algorithm~\cite{NSG19,NEURIPS2019,HNSW2020} to iteratively traverse the graph along decreasing-distance paths until convergence.


 
  Sparse Neighborhood Graph (SNG) is a widely used variant designed to further reduce graph density while preserving practical search quality~\cite{Arya93}. It has been widely adopted in several state-of-the-art systems, including DiskANN~\cite{NEURIPS2019}, HNSW~\cite{HNSW2020} and NSG~\cite{NSG19}. The worst-case search analysis of SNG by Indyk et al.~\cite{indyk2023worst} provides a framework based on aspect ratios\footnote{The aspect ratio is the ratio of the maximum to minimum distance between any pair of points in the dataset.}. Complementary to this, a developing and expected cardinality-based framework is both meaningful and necessary for understanding SNG behavior under random or empirical data distributions.
 
The construction process of SNG typically involves two main phases: 
(1) Initial graph generation through a random or approximate k-NN graph, and (2) Graph refinement through iterative pruning\cite{revisiting2025}. The pruning phase is the computational bottleneck, with \(O(n)\) worst-case complexity per node, as it requires scanning all candidate neighbors and evaluating geometric conditions that ensure the graph supports fast convergence of the \textit{GreedySearch} algorithm.
To reduce index construction time, a common practice in SNG construction is to impose a maximum node degree via a truncation parameter $R$. However, existing methods typically determine \(R\) through parameter sweeping (or a parameter study)~\cite{NEURIPS2019}, commonly implemented as a binary search guided by the downstream search performance of the constructed graph. This tuning process is computationally expensive, as each iteration requires constructing and evaluating a full index. These limitations highlight the need for a theoretically grounded approach to selecting \(R\)—one that ensures high search performance while reducing tuning overhead. 

To address both the theoretical and practical challenges in SNG-based approximate nearest neighbor search, we propose \textbf{OPT-SNG}, a 
martingale-based probabilistic framework for analyzing and 
optimizing SNG construction. In this framework, 
we characterize the SNG construction process as a discrete stochastic 
process driven by successive pruning operations. By exploiting the 
martingale structure naturally arising from the evolution of candidate 
sets, we establish that the expected trajectory can be characterized by 
an associated ordinary differential equation (ODE). Furthermore, we prove 
that the out-degree of each node corresponds precisely to the stopping 
time of this stochastic process. Within this framework, we derive bounds on two fundamental structural properties of SNGs—the maximum out-degree and the expected search path length. Building upon these theoretical results, we further quantify index construction complexity, which serves as a principled foundation for optimizing the truncation parameter \(R\) in SNG-based ANNS systems.

We conclude our contributions as follows.
\begin{itemize}[leftmargin=*]
    \item We introduce OPT-SNG, the first martingale-based probabilistic model of SNG pruning dynamics. By formulating the pruning process as a submartingale with respect to 
    a natural filtration, we transform graph refinement into an analyzable 
    probabilistic system. This enables the application of concentration 
    inequalities and differential equation methods to study graph evolution.
\item  We establish a degree bound of 
    $O(n^{2/3+\epsilon})$ for any small constant $\epsilon > 0$, substantially improving over the previous coarse 
    $O(n)$ bound from \cite{NEURIPS2019}. This result highlights the 
    inherent sparsity of SNGs and demonstrates that graph structure can be 
    controlled through proper pruning dynamics analysis. 
\item We prove that GreedySearch on SNGs converges in $O(\log n)$ expected steps. This 
    analytical guarantee validates the intuition behind SNG's empirical 
    efficiency and provides theoretical justification.
    \item Using the above theoretical foundations, we analyze the asymptotic complexity of the truncated SNG construction process and derive a closed-form expression for the optimal truncation parameter \(R\). This yields a principled and automated parameter selection method, avoiding costly parameter sweeping over \(R\). Empirically, our method achieves 
    comparable or superior search performance while reducing index construction 
    time by an average of $5.9\times$ across datasets, with peak speedups 
    reaching $15.4\times$.
\end{itemize}    
All detailed proofs of lemmas and theorems are provided in the full version~\cite{Ma2025} 
of this work on our GitHub repository.


The remainder of this paper is organized as follows. Section~\ref{sec:preliminaries} reviews the necessary background on SNG-based ANNS, formally defines the problem setting, and introduces the key algorithms considered in this study. In Section~\ref{sec:theoretical_analysis}, we present our probabilistic modeling framework that captures the stochastic dynamics of the pruning process. Sections~\ref{sec:degree} and~\ref{sec:path} derive the main theoretical bounds on graph degree and search path length, respectively. 
Section~\ref{sec:algorithm_optimization} describes our optimized construction algorithm and the principled method for truncation parameter selection. Experimental results validating our theoretical predictions and demonstrating practical speedups are reported in Section~\ref{sec:experiments}. We survey the related works and position our contributions within the broader literature in Section~\ref{sec:related_work}, and Section~\ref{sec:conclusion} concludes with a summary and directions for future research.

\section{\(k-\)ANN problem and Sparse Neighborhood Graphs}
\label{sec:preliminaries}
In this section, we formalize the approximate nearest search problem formulation, review existing SNG-based ANNS construction, and summarize the key algorithmic components relevant to our theoretical analysis. Given a dataset $P = \{p_1, \ldots, p_n\} \subset \mathbb{R}^d$ of $n$ points 
in $d$-dimensional Euclidean space with distance metric $\|\cdot\|$, we consider a query point \(q \in \mathbb{R}^d\). 
See Table~\ref{tab:notation} for a summary of key notations used throughout.

\subsection{Problem Formulation}
We formally define the \(k\)-approximate nearest neighbor search problem as follows.

\begin{definition}[$k$-Approximate Nearest Neighbor Search]
\label{def:k-anns}
Given a dataset \(P \subset \mathbb{R}^d\), a query point \(q \in \mathbb{R}^d\), and an integer \(k \geq 1\), the \(k\)-approximate nearest neighbor search (K-ANNS) problem aims to return a set \(A \subseteq P\) of \(k\) data points that are closest to \(q\), subject to a desired trade-off between recall and query latency.
\end{definition}


Graph-based ANNS methods address this problem by modeling the dataset \(P\) as the vertex set of a graph and constructing edges between data points to facilitate efficient search. Typically, the construction process begins with an initial graph, such as a random graph or a \(k\)-nearest neighbor graph, and then refines this structure through iterative edge pruning. Formally, these methods construct a graph \(G = (P, E)\), where each vertex corresponds to a data point, and the edge set \(E\) is designed to preserve neighborhood proximity, enabling fast approximate nearest neighbor search via local graph traversal.



\subsection{\textit{GreedySearch} Algorithm}
Despite differences in index construction, many graph-based ANNS methods share
a common query-processing phase: a greedy best-first traversal known as
\textit{GreedySearch}.
Representative examples include KGraph~\cite{Kgraph}, DiskANN~\cite{NEURIPS2019},
NSG~\cite{NSG19}, HNSW~\cite{HNSW2020}, DPG~\cite{DPG}, and SPTAG~\cite{SPANN2021}.
The procedure is described in Algorithm~\ref{alg:greedy_search}.

Starting from an entry point \(s\), the algorithm maintains (1) a visited set and (2) a priority queue of candidate nodes ordered by distance to the query \(q\). At each iteration, the closest unexpanded vertex \(\hat{p}\) is extracted from the priority queue. Its outgoing neighbors are then explored and added to the priority queue if they have not been visited, and simultaneously, \(\hat{p}\) is added to the visited set. The process continues until either the search width reaches the predefined budget \(L\) or the priority queue becomes empty. The parameter \(L\) controls the search width and governs the recall and latency trade-off: larger values of \(L\) explore more candidate nodes, improving recall at the cost of increased search time. 
\begin{algorithm}[t]
\small
    \SetVline 
    \SetFuncSty{textsf}
    \SetArgSty{textsf}
\caption{\texttt{GreedySearch}($G, q, s, k, L$)}
\label{alg:greedy_search}
\Input{graph $G=(P,E)$, query point $q$, entry point $s$, $k$ for top-$k$, search width $L$}
\Output{top-$k$ approximate nearest neighbors of $q$}

\State{$Visited \leftarrow \{s\}$}
\State{$Queue \leftarrow$ priority queue initialized with $s$}

\While{$|Visited| < L$ \textbf{and} $Queue \neq \varnothing$}
{
    \State{$\hat{p} \leftarrow$ extract-min$(Queue)$}
    \tcc{closest unexpanded point}
    \For{each {$p' \in N_{\text{out}}(\hat{p})$}}
    {
        \If{$p' \notin Visited$}
        {
            \State{$Visited \leftarrow Visited \cup \{p'\}$}
            \State{insert $p'$ into $Queue$}
        }
    }
}
\Return{the $k$ points in $Visited$ closest to $q$}
\end{algorithm}

Maintaining and accelerating the convergence of \textit{GreedySearch} is a central objective in graph construction. In particular, ensuring graph connectivity while preserving monotonic search paths is both subtle and essential for guaranteeing fast convergence and high recall. The Sparse Neighborhood Graph (SNG)~\cite{Arya93} has been shown to 
effectively achieve these objectives by iteratively selecting nearest 
neighbors and pruning redundant candidates~\cite{indyk2023worst,Haya24}.

\subsection{Sparse Neighborhood Graph Construction}
\label{sec:2.3}
A core operation in SNG construction is the pruning procedure, which 
selectively retains informative neighbors while aggressively eliminating 
redundant connections. This procedure is a fundamental building block in 
several state-of-the-art graph-based ANNS systems, including DiskANN~\cite{NEURIPS2019}, 
NSG~\cite{NSG19}, and HNSW~\cite{HNSW2020}.
\begin{algorithm}[t]
\small
\SetVline
\SetFuncSty{textsf}
\SetArgSty{textsf}
\caption{\texttt{SNG-Prune}$(p, S, \alpha, R)$}
\label{alg:robust_prune}

\KwIn{
point $p$,
candidate set $S$,
pruning parameter $\alpha \ge 1$,
degree bound $R$
}
\KwOut{
pruned neighbor set $N_{\text{out}}(p)$ with $|N_{\text{out}}(p)| \le R$
}

$N_{\text{out}}(p) \gets \emptyset$\;
Sort $S$ in ascending order of $\|p-\cdot\|$\;

\While{$S \neq \emptyset$}{
    $p^* \gets \arg\min_{p' \in S} \|p-p'\|$\;
    $N_{\text{out}}(p) \gets N_{\text{out}}(p) \cup \{p^*\}$\;
    Remove $p^*$ from $S$\;

    \If{$|N_{\text{out}}(p)| = R$}{
        \textbf{break}\;
    }

    \ForEach{$p' \in S$}{
        \If{$\|p-p'\| \ge \alpha \cdot \|p^*-p'\|$}{
            Remove $p'$ from $S$\;
        }
    }
}
\Return{$N_{\text{out}}(p)$}\;
\end{algorithm}

 The algorithm proceeds iteratively: in each iteration, it selects the 
nearest point $p^*$ from the candidate set $S$, adds $p^*$ to the neighbor 
set $N_{\text{out}}(p)$, and then prunes candidates from $S$. Specifically, 
a candidate $p'$ is removed if it satisfies the blocking condition  
\(
\|p-p'\| \ge \alpha \cdot \|p^*-p'\|.
\)
This condition captures the intuition that $p'$ is blocked by the 
recently selected neighbor $p^*$—that is, accessing $p'$ through the 
intermediate neighbor $p^*$ is at least as efficient as a direct connection. 
Consequently, the direct edge $p \to p'$ is redundant and can be safely removed 
without compromising graph navigability.

Two parameters control the pruning behavior:

\noindent \textbf{Pruning Aggressiveness Parameter} $\alpha$: In the 
non-relaxed setting ($\alpha = 1$), the blocking condition is symmetric and 
purely geometric. However, to avoid pathological configurations (e.g., collinear 
points), the Vamana algorithm~\cite{NEURIPS2019} introduces $\alpha > 1$, which 
relaxes the blocking condition and ensures bounded graph diameter. In practice, 
$\alpha$ is typically fixed in the range $1 \leq \alpha \leq 2$~\cite{OOD} and 
does not require extensive tuning, as the SNG structure remains robust across 
this range.

\noindent \textbf{Degree Truncation Parameter} $R$: To control graph density 
and construction cost, a maximum degree limit $R$ is imposed. The algorithm 
terminates after exactly $R$ neighbors have been selected, bounding the out-degree 
of each node. This truncation ensures that construction remains tractable while 
allowing a trade-off between graph quality and computational efficiency.

A crucial property of SNG-based construction is that it preserves \emph{monotonic searchability}~\cite{NSG19}. Specifically, for any point \(p\), there exists at least one monotonic path consisting of edges that lead to progressively closer points, eventually reaching the nearest neighbor of \(p\). This property is essential for the effectiveness of \textit{GreedySearch}, as it guarantees that each expansion step can make consistent progress toward the target.

Despite its empirical success across multiple state-of-the-art systems (DiskANN, NSG, HNSW), the theoretical understanding of SNG construction remains incomplete. The introduction of relaxation parameter $\alpha$ substantially complicates the analysis: the blocking condition becomes asymmetric, and the geometric properties become more intricate. While Indyk et al.~\cite{indyk2023worst} provide worst-case analysis based on aspect ratios and Prokhorenkova and Shekhovtsov \cite{Haya24} establish bounds under specific distributional assumptions, a comprehensive expected-case, cardinality-based framework characterizing SNG behavior has been lacking.

In particular, the truncation parameter $R$ remains a critical tuning knob. Existing practice relies on expensive parameter sweeping: practitioners repeatedly construct indices with different $R$ values, evaluate their search performance through binary or grid search, and select the value that optimizes the recall-latency trade-off. This iterative process is computationally prohibitive for large-scale datasets and obscures the fundamental principles governing the relationship between $R$, graph structure, and search efficiency.

Our OPT-SNG framework addresses these gaps by providing a principled, theoretically grounded approach to analyzing SNG dynamics and optimizing 
the truncation parameter $R$.
\section{OPT-SNG Probabilistic Framework Setup}
\label{sec:theoretical_analysis}
\begin{figure}
    \centering
\includegraphics[width=1\linewidth]{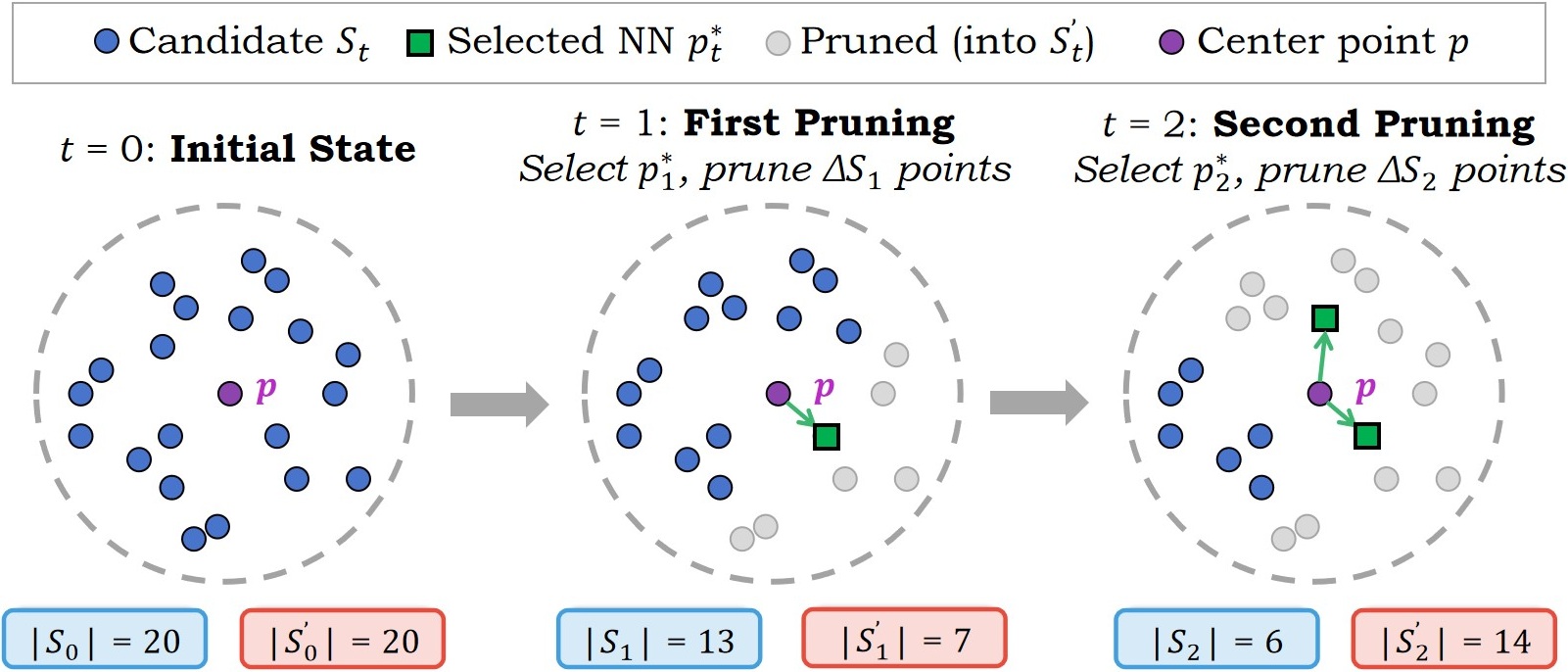}
    \caption{Illustration of the SNG process for a single point.}
    \label{fig:set_example}
\end{figure}
  
\textbf{Stochastic Modeling of the Pruning Process.} To analyze the construction of SNG from a dynamic perspective, we establish the following probabilistic framework that models the evolution of the pruning process. Consider a fixed point \( p \in P \) being processed. Let \( t \in \{0,1, 2, \dots\} \) denote the iteration index, where each iteration is associated with two dynamically evolving sets.

\begin{definition}[Candidate Set and Processed Set]
At iteration \(t\), the \emph{candidate set} is denoted by \( S_t \). It is initialized as
\(
S_0 = P \setminus \{p\},
\)
meaning that, initially, the point \(p\) is allowed to connect to all other points in the dataset. At each iteration, \(S_t\) consists of the points that have not yet been pruned and have not been selected as neighbors.
The corresponding \emph{processed set} is defined as
\(
S'_t = P \setminus (S_t \cup \{p\}),
\)
which contains all points that have either been pruned or already selected as nearest neighbors in previous iterations.
\end{definition}
In the non-truncated setting, the pruning process terminates when \( S_t = \emptyset \), or equivalently when \( S'_t = P \setminus \{p\} \).  Figure~\ref{fig:set_example} provides a schematic illustration of this construction process. Initially ($t=0$), all other points belong to the candidate set $S_0$. At iteration $t=1$, the nearest neighbor $p_1^*$ is selected, and a subset of points $\Delta S_1$ is pruned and moved into the processed set $S'_1$. 
This process repeats at subsequent iterations: at iteration $t$, a new nearest neighbor $p_t^*$ is selected from $S_{t-1}$, additional points are pruned, and the candidate set shrinks accordingly.

 Due to the randomness of point distributions in high-dimensional space, the number of pruned points at each iteration is itself random, which induces randomness in the resulting graph structure. Consequently, the pruning procedure can be naturally viewed as a stochastic process. The correspondence between iteration indices and the evolution of \( S_t \) and \( S'_t \) allows us to formalize the construction using tools from probability theory.

\textbf{Probability Space for SNG construction.} We introduce a probability space \( (\Omega, \mathcal{F}, \mathbb{P}) \), where \(\Omega\) denotes the sample space containing all possible sequences of nearest neighbor selections and pruning outcomes during the overall SNG construction. The \(\sigma\)-algebra  \(\mathcal{F}\) captures all measurable events on \(\Omega\) , and \(\mathbb{P}\) is the associated probability measure. We further define the \emph{natural filtration} \( \{\mathcal{F}_t\}_{t \geq 1} \), representing the information available up to iteration \( t \). Specifically, 
\[
\mathcal{F}_t = \sigma\left((p^*_1, S_1), (p^*_2,S_2), \dots, (p_t^*,S_t)\right)
\]
where \( p^*_i = \arg\min_{s\in S_i} \|p-s\| \) is the nearest neighbor selected at iteration \( i \). The candidate set is updated according to the pruning rule \( S_{t+1} = S_t \setminus (\{ p' \in S_t : \|p- p'\|\geq \alpha \cdot \|p^*- p'\| \} \cup \{p^*_{t+1}\} )\) which removes both the selected nearest neighbor and all candidates blocked by it. This formulation captures the pruning process as a sequence of progressively revealed decisions.

\textbf{Martingale Perspective.} We now formalize the dynamic trend of the construction process.  Intuitively, at each iteration, the processed set strictly grows while the candidate set shrinks. This monotonic behavior can be rigorously characterized using martingale theory. 

\begin{definition}[Martingale]
A stochastic process \( \{X_t\}_{t \geq 0} \) is a \emph{martingale} with respect to a filtration \( \{\mathcal{F}_t\} \) if
 \[\ \mathbb{E}[|X_t|] < \infty \quad and \quad \mathbb{E}[X_{t+1} | \mathcal{F}_t] = X_t \]
for all \( t \geq 0 \).
It is a \textit{supermartingale} if  \( \mathbb{E}[X_{t+1} | \mathcal{F}_t] \leq X_t \), and a \textit{submartingale} if \( \mathbb{E}[X_{t+1} | \mathcal{F}_t] \geq X_t \). 
\end{definition}

\begin{table}[t]
\centering
\caption{Summary of Notation}
\label{tab:notation}
\begin{tabular}{cc}
\toprule
\textbf{Symbol} & \textbf{Description} \\
\midrule
\( P \) & Dataset of \( n \) points in \( \mathbb{R}^d \) \\
\( S_t \) & Candidate set at iteration \( t \) \\
\( S'_t \) & Processed set at iteration \( t \): \( S'_t = P \setminus (S_t \cup \{p\}) \) \\
\( p^*_t \) & Nearest neighbor of \( p \) in \( S_t \) at iteration \( t \) \\
\( \mathcal{F}_t \) & Natural filtration up to iteration \( t \) \\
\( \Delta S_t \) & Number of points pruned in iteration \( t \) \\
\( \pi_t \) & Pruning probability at iteration \( t \) \\
\( R \) & Truncation parameter (maximum degree limit) \\
\( M \) & Number of processed points at a given level \\
\(\alpha\)& Relaxation parameter\\
\( t \) & Iteration index (time step) \\
\bottomrule
\end{tabular}
\end{table}

In the context of SNG construction, the size of the candidate set \( |S_t| \) forms a supermartingale, while the size of the processed set \( |S'_t| \) forms a submartingale with respect to the natural filtration \( \{\mathcal{F}_t\} \).  This reflects the fact that, in expectation, candidates are eliminated while processed points accumulate over time. We focus on $|S'_t|$ rather than $|S_t|$ because, as complementary sets, 
$|S'_t|$'s monotonic growth is suited to our analytical framework.

From the SNG pruning rule and the probability framework, we summarize the following properties of processed sets \(S'_t\), which will facilitate the analysis of the construction process. These properties follow directly from our previous results and do not require separate proof.
\begin{lemma}[Dynamic Property of \(S'_t\)]
\label{lemma:prop}
Given a dataset \( P \subset \mathbb{R}^d \) consisting of \( n \) points, the SNG construction of \(p \in P\) satisfies the properties as follows:\\
\begin{itemize}[leftmargin=*]
    \item \textbf{Monotonicity}: The processed sets form a nested sequence \(\emptyset=S_0'\subset S'_1 \subset S'_2 \subset \cdots \subset S'_T \), with a submartingale \( |S'_t| \).
    \item \textbf{Termination}: The final processed set includes \(n-1\) points, excluding only the current point \( p \).
    \item \textbf{Degree property}: Out-degree of \(p\) equals the number of iterations, as each iteration adds exactly one directed edge from \(p\) to its nearest neighbor in the current candidate set.
\end{itemize}
\end{lemma}

Based on this framework, we introduce the following metric to measure the progression of graph construction.

\begin{definition}[\( M \)-\( t \) Level]
The SNG construction is said to reach \textit{the \( M \)-\( t \) level} at iteration \( t \) if \(t\) is the smallest iteration that \( |S'_t| \geq M \). This represents the \textit{first passage time} at which at least \( M \) points have been processed.
\end{definition}

The $M$-$t$ metric serves as a tool to track the progress of graph construction. For a fixed iteration count $t$, a larger value of $M$ indicates faster pruning and more rapid expansion of the processed set. For a fixed $M$, a smaller value of $t$ implies that the algorithm has processed $M$ points more quickly. 


\section{Degree Bounds of SNG}
\label{sec:degree}
In this section, we establish upper bounds on the degree of SNG. As shown in Lemma~\ref{lemma:prop}, the out-degree of each point equals the number of pruning iterations during construction, since exactly one edge is added in each iteration. Consequently, bounding the degree reduces to analyzing the dynamics of the pruning process and estimating the total number of iterations required for graph construction. 

\rev{
\textbf{A preview of the main result.} Before delving into the technical analysis, we state the central conclusion of this section: with probability $1$, the maximum out-degree of the constructed SNG is $O(n^{2/3+\varepsilon})$ for any small $\varepsilon>0$ (Theorem~\ref{theorem:max_degree}, Section~\ref{sec:main-degree}). The remainder of Section~\ref{sec:degree} formalizes the two-phase pruning dynamics needed for this bound: Lemma~\ref{lem:probability} (pruning probability), Lemma~\ref{lem:2-dim} (2D base case), Lemma~\ref{lem:sublinear} (sublinear progress in $d$ dimensions), and finally Theorem~\ref{theorem:max_degree}.
}\revnote{R3.D3} 

For completeness, all detailed proofs of lemmas and theorems are provided in the full version of this paper on our GitHub repository~\cite{github} and arxiv~\cite{Ma2025}.
\subsection{Two-Phase Behavior: Empirical Motivation}
\label{subsec:2-phase}
To motivate our theoretical analysis, we first present empirical evidence revealing a distinctive \emph{two-phase behavior} in SNG construction. Figure~\ref{fig:two-case} illustrates two representative cases from the SIFT1M dataset: a fast case completing in only 31 iterations and a slow case requiring over 1000 iterations. Despite this disparity in total iterations, both cases exhibit a consistent structural pattern: an \underline{initial phase of rapid candidate reduction}, followed by a prolonged \underline{plateau phase} during which progress slows significantly. 

This empirical observation strongly suggests that a phase-wise analysis is both natural and necessary for deriving degree bounds. In the remainder of this section, we formalize this intuition. Specifically, Lemma~\ref{lem:sublinear} establishes rapid progress in the initial phase, which in turn enables a degree bound in Theorem~\ref{theorem:max_degree}.

\begin{figure}[t]
    \centering
    \begin{subfigure}[b]{0.48\linewidth}
        \centering
        \includegraphics[width=0.98\linewidth]{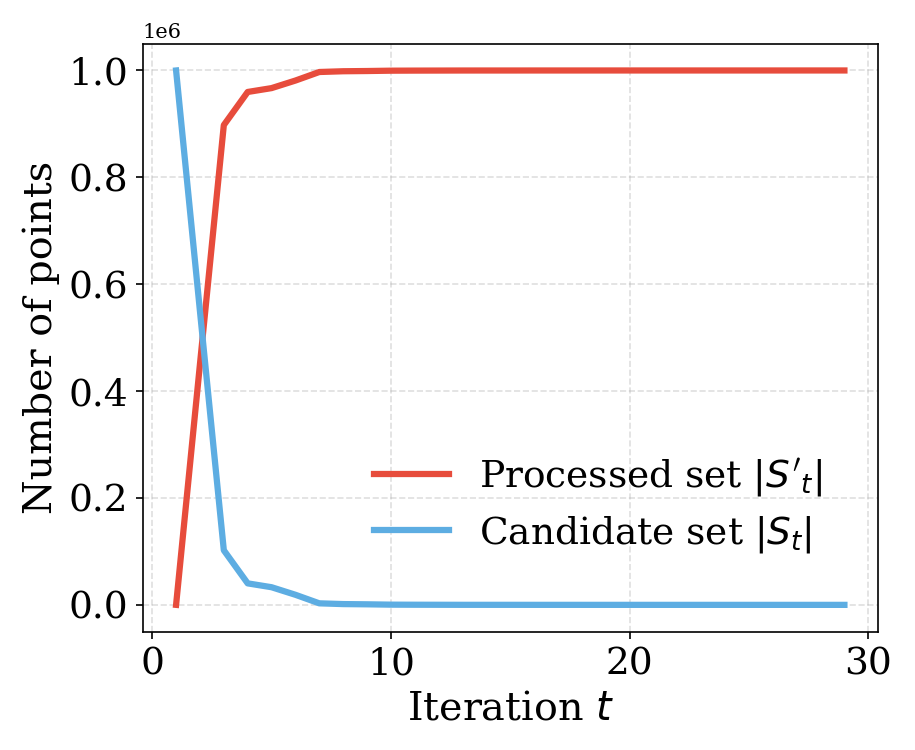}
        \caption{Fast pruning}
        \label{fig:fast_pruning}
    \end{subfigure}%
    \hfill  
    \begin{subfigure}[b]{0.48\linewidth}
        \centering
        \includegraphics[width=0.98\linewidth]{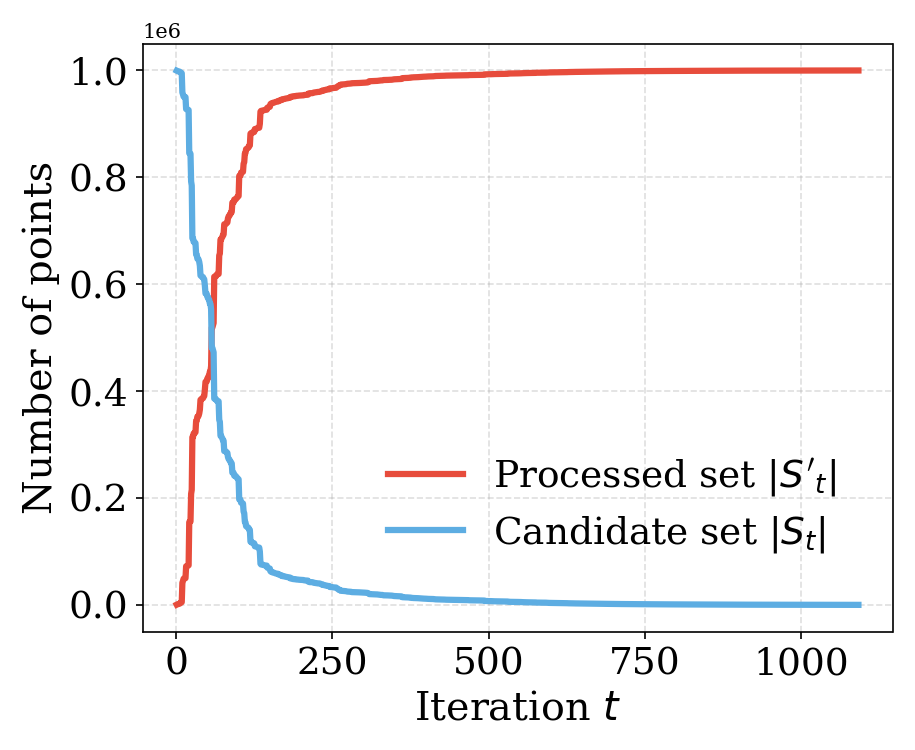}
        \caption{Slow pruning}
        \label{fig:slow_pruning}
    \end{subfigure}
    \caption{Evolution of processed and candidate sets during SNG construction. (a) shows a fast pruning case terminating in 31 iterations. (b) shows a slow pruning case requiring over 1,000 iterations.}
    \label{fig:two-case}
\end{figure}
\vspace{-2pt} 
\subsection{Pruning Probability Analysis}
We begin by analyzing the pruning probability at each iteration, which governs the rate at which candidate points are eliminated. This is a key ingredient for quantifying the speed of the initial fast pruning phase.
\begin{lemma}[Pruning Probability in SNG Construction]
\label{lem:probability}
Given a dataset \( P \subset \mathbb{R}^d \) consisting of \( n \) points independently and uniformly distributed within a ball of radius \(\rho_0\), centered at \(p\), we consider the SNG construction processed on the center point \(p\) with \( \alpha = 1 \). The probability \(\pi_t\) that a point \(p'\) is pruned in the \(t-\)th iteration is given by
\[
\pi_t = \frac{ I_{1 - (\rho_t/2\rho_0)^2}\left(\frac{d+1}{2}, \frac{1}{2}\right) - (\rho_t/\rho_0)^d \cdot I_{0.75}\left(\frac{d+1}{2}, \frac{1}{2}\right) }{2 \left(1 - (\rho_t/\rho_0)^d\right)},
\]
where \(I_x(a, b)\) denotes the \textit{regularized incomplete Beta function}, and \(\rho_t\) is the distance between the nearest neighbor and \(p\) in \(t-th\) iteration.
For large \(d\), this probability satisfies
\[
\frac{ I_{0.75}\left(\frac{d+1}{2}, \frac{1}{2}\right)}{2} < \pi_t < \frac{1}{2}.
\]
\end{lemma}
\rev{The complete proof, including the volume computation via the regularized incomplete Beta function and the lower bound for general $d$, is provided in Appendix~\ref{app:probability}. in \cite{Ma2025}}\revnote{R1.D1}
\subsection{Two-Dimensional Case}
To build intuition, we first analyze a low-dimensional setting using the $M$-$t$ level. In particular, for the two-dimensional case, we show that \emph{only four iterations} suffice to prune over 80\% of all candidate points with high probability.
\begin{lemma}[Two-Dimensional Case]
\label{lem:2-dim}
Given a dataset \( P \subset \mathbb{R}^2 \) with \(n\) points independently and uniformly distributed within a ball of radius \(\rho_0\). Then, with probability \(1-o(1)\), the SNG construction process reaches \((0.8(n-1))\)-\(4\) level.
\end{lemma}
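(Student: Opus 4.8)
The plan is to reformulate the \((0.8(n-1))\)-\(4\) level statement as the claim that, with probability \(1-o(1)\), the candidate set satisfies \(|S_4| \le 0.2(n-1)\), i.e. at most one fifth of the points survive four pruning iterations. Since \(15/16 = 0.9375 > 0.8\), it suffices to show that each of the first four iterations removes close to half of the remaining candidates; the gap between \(0.9375\) and \(0.8\) is the budget that absorbs all lower-order error and concentration terms. The starting observation is that in \(\mathbb{R}^2\) the nearest-neighbor distance is small: for \(n-1\) points uniform in the disk of radius \(\rho_0\), the probability that none lies within radius \(r\) of \(p\) is \((1-(r/\rho_0)^2)^{n-1}\), so \(\rho_1/\rho_0 = O(n^{-1/2})\) with probability \(1-o(1)\), and the same scaling persists for \(\rho_2,\rho_3,\rho_4\) because each surviving candidate region still has area \(\Theta(\rho_0^2)\) and contains \(\Theta(n)\) points. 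Hence \(\rho_t/\rho_0 = o(1)\) for all \(t \le 4\).

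First I would feed \(\rho_t/\rho_0 = o(1)\) into the \(d=2\) instance of Lemma~\ref{lem:probability}. Since \(I_1(\tfrac32,\tfrac12)=1\) and the regularized incomplete Beta function is continuous, the numerator tends to \(1\) and the denominator to \(2\), giving \(\pi_t = \tfrac12 - o(1)\). A cleaner geometric description—useful for the iterations beyond the first, where the candidate set is no longer uniform in a full ball—is that when \(\rho_t \to 0\) a point \(p'\) is pruned essentially iff the angle between \(\vec{pp'}\) and \(\vec{pp^*_t}\) is below \(\pi/2\), because the condition \(\|p-p'\|>\|p^*_t-p'\|\) reduces to \(\langle p^*_t-p,\,p'-p\rangle > \rho_t^2/2\) and all but an \(O(1/n)\) fraction of candidates lie at distance \(\gg \rho_t\) from \(p\). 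Angular uniformity of the points then yields \(\pi_1 = \tfrac12 - o(1)\), and for \(t\ge 2\) the same half-plane cut applied to the wedge-shaped surviving region removes at least half of it, so \(\pi_t \ge \tfrac12 - o(1)\) throughout. Conditioning on the geometry at step \(t\), the number of pruned points is a sum of (conditionally) near-independent indicators over the \(|S_t|=\Theta(n)\) candidates, so a Chernoff/Azuma bound gives \(|S_{t+1}| = (1-\pi_t)\,|S_t|\,(1+o(1))\) with probability \(1-o(1)\). Chaining the four steps and taking a union bound over the geometric and concentration events yields \(|S_4| \le (n-1)(\tfrac12+o(1))^4 \le 0.2(n-1)\) for \(n\) large.

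The main obstacle is the second and subsequent iterations, where the uniform-ball hypothesis of Lemma~\ref{lem:probability} no longer holds: after the first cut the candidates are uniform only on a half-disk, and after further cuts on a thin angular wedge with \(p\) on its boundary. Two points need care here. First, I must verify that the small-radius regime \(\rho_t/\rho_0 = o(1)\) and the angular pruning picture remain valid on these evolving regions, which is precisely the content flagged in the applicability discussion of Appendix~\ref{app:Applicability}. Second, the pruning indicators within a single iteration are coupled through the shared selected neighbor \(p^*_t\) and through the survival conditioning of earlier rounds, so the concentration step must be justified by a martingale difference / Azuma argument on the natural filtration \(\{\mathcal{F}_t\}\) rather than by naive independence. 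Once these two points are in place, the \(93.75\%\)-versus-\(80\%\) slack makes the final inequality routine.
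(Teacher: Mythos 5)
Your proposal is correct in outline but takes a genuinely different route from the paper. The paper's proof never enters the small-radius regime: it invokes the worst-case two-dimensional bound \(\pi_i > 1/3\) from Lemma~\ref{lem:probability} (the minimum, attained as \(\rho_t/\rho_0 \to 1\)), applies it uniformly to all four iterations, and observes that \(1-(2/3)^4 = 65/81 \approx 0.8025\) just clears the \(0.8\) threshold; concentration is then handled by Chebyshev's inequality on the binomial increments \(\Delta S_i \sim \mathrm{Bin}(|S_{i-1}|-1,\pi_i)\), with a product bound over the four steps giving \(1-o(1)\). Your argument instead establishes \(\rho_t/\rho_0 = o(1)\) with high probability, pushes \(\pi_t\) up to \(1/2-o(1)\), and lands at \(15/16\), leaving ample slack. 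What each buys: the paper's route is shorter and needs no control of the nearest-neighbor distance or of the shape of the surviving region—only the pointwise lower bound on \(\pi_i\)—but its numerical margin is razor-thin (\(0.0025\)) and it silently reapplies Lemma~\ref{lem:probability} at iterations \(t\ge 2\), where the candidates are no longer uniform on the punctured ball, deferring the justification to the informal discussion in Appendix~\ref{app:Applicability}. Your route costs more geometry (the half-disk and wedge analysis), but it confronts that non-uniformity head-on, and the angular picture in fact yields something stronger: after two cuts the surviving wedge has angular width at most \(\pi/2 + o(1)\), so the third cut removes essentially all of it and almost all points are processed within three iterations, up to \(o(n)\) stragglers at distance \(O(\rho_t)\) from \(p\). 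The two caveats you flag are real but routine: conditionally on the surviving region and on the selected neighbor \(p^*_t\), the remaining candidates are i.i.d.\ uniform on that region, so the pruning indicators are conditionally independent and plain Chernoff suffices (Azuma is not needed), and the \(15/16\)-versus-\(0.8\) slack absorbs all the resulting \(o(1)\) corrections.
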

\vspace{-0.3cm}
\begin{proof}
Recall that in the SNG construction of \(p \in P\), we iteratively prune a candidate set \( S_i \), starting with \( S_0 = P\setminus{\{p\}} \). \rev{%
The pruning event $A_{p'}=\{\|p-p'\|\ge\alpha\|p_t^*-p'\|\}$ depends only on
the relative position of the single candidate $p'$ to the two reference
points $(p,p_t^*)$. Conditional on $p_t^*$, the remaining candidates are
i.i.d.\ samples and their relative positions to $(p,p_t^*)$ do not influence
one another. Hence the events $\{A_{p'}\}$ are mutually independent given
$p_t^*$, and $\Delta S_t  \sim \mathrm{Bin}(|S_{t-1}|-1,\pi_t)$ for given \(p_t^*\).
}
\revnote{R2.D2}
Let \( \Delta S_i \) denote the number of points pruned at iteration \( i \). 
 After \( t \) iterations, the processed set \( S'_t \) consists of all pruned points and the selected nearest neighbors. Therefore,
\(
|S'_t| = \sum_{i=1}^t (\Delta S_i + 1),\ t = 2,3,\ldots,
\)
with \( |S'_1| = \Delta S_1 + 1 \), since each iteration contributes exactly one selected nearest neighbor in addition to the pruned points.

From Lemma~\ref{lem:probability}, we know that for \( d = 2 \), the pruning probability at each iteration satisfies \( \pi_i > 1/3 \). To estimate the pruning progress, we conservatively assume a uniform lower bound \( \pi_{\min} = 1/3 \) across all iterations.

The expected number of pruned points at iteration \( i \) is
\(
\mathbb{E}[\Delta S_i] = \pi_i (|S_{i-1}| - 1),
\)
and the variance is
\(
\mathrm{Var}(\Delta S_i) = \pi_i (1 - \pi_i)(|S_{i-1}| - 1).
\)
Since \( \pi_i (1 - \pi_i) < 1/3 \), the normalized variance can be bounded as
\begin{align}
\mathrm{Var}\left( \frac{\Delta S_i}{n-1} \right)
= \frac{\pi_i (1 - \pi_i)(|S_{i-1}| - 1)}{(n-1)^2}
\le \frac{1}{3(n-1)}.    
\end{align}
Under the minimal pruning rate assumption, the expected fraction of points pruned at each iteration is at least
\[
\pi_{\min}^1 = \tfrac{1}{3}, \quad
\pi_{\min}^2 = \tfrac{2}{9}, \quad
\pi_{\min}^3 = \tfrac{4}{27}, \quad
\pi_{\min}^4 = \tfrac{8}{81}.
\]
Consequently, the cumulative expected fraction of processed points after four iterations exceeds \(0.8\).

To ensure that this behavior holds with high probability, we control the deviation of \( \Delta S_i \) from its expectation by choosing constants \( \epsilon_i > 0 \) such that
\[
\epsilon_1 < \frac{\mathbb{E}[\Delta S_1]}{n-1} - \frac{1}{3}, \quad
\epsilon_2 < \frac{\mathbb{E}[\Delta S_2]}{n-1} - \frac{2}{9},
\]
\[
\epsilon_3 < \frac{\mathbb{E}[\Delta S_3]}{n-1} - \frac{4}{27}, \quad
\epsilon_4 < \frac{\mathbb{E}[\Delta S_4]}{n-1} - \frac{8}{81}.
\]
Applying Chebyshev’s inequality yields, for each iteration \( i \),
\begin{align}
\Pr\left(
\left|
\frac{\Delta S_i}{n-1}
-
 \frac{\pi_i(|S_{i-1}| - 1)}{n-1}
  \right|
  \le \epsilon_i
  \right)
  &\ge
  1 - \frac{\mathrm{Var}\left( \frac{\Delta S_i}{n-1} \right)}{\epsilon_i^2}
  \\&\ge
  1 - \frac{1}{3(n-1)^2 \epsilon_i^2}.    
\end{align}
Assuming that all four deviations remain within their respective bounds, the probability that the cumulative number of processed points satisfies \( |S'_4| \ge 0.8(n-1) \) is bounded below by
\begin{align}
\Pr(|S'_4| \ge 0.8(n-1))
\ge
\prod_{i=1}^4
\left(1 - \frac{1}{3(n-1)^2 \epsilon_i^2}\right)
= 1 - o(1).    
\end{align}
Therefore, with probability \( 1 - o(1) \), the SNG construction reaches the \( (0.8(n-1)) \)-4 level.
\end{proof}

\rev{\subsection{High-Dimensional Settings}\revnote{R1.O1}
\label{subsec:highdim}
We now extend the analysis to general $d$-dimensional space. The same phenomenon as in the 2D case manifests as \emph{sublinear-time progress}: a linear number of points is processed within sublinear iterations. This behavior is formalized in the following lemma.}
\begin{lemma}[Sublinear Time Progress]
\label{lem:sublinear}
Given a dataset \( P \subset \mathbb{R}^d \) with \( n \) points distributed uniformly, for all relaxation parameter \(
\alpha>0
\):\\
(1) For any constants \( \nu_1, \nu_2 \in (0,1) \) with \( \nu_1 > \nu_2 \), with probability one, the SNG construction reaches the \( n^{\nu_1} \)-\( O(n^{\nu_2}) \) level.\\
(2) For any constant \( \nu \in (\frac{2}{3},1) \), with probability one, the SNG construction reaches the \( (n - n^{1-\nu}) \)-\( O(n^{\nu}) \) level.
\end{lemma}

\begin{proof}
Part (1).
Fix constants \(\nu_1>\nu_2\) and define the first-passage time
\(
t_1 := \inf\{t\ge 0:|S'_t|\ge n^{\nu_1}\}.
\)
We prove that $t_1=O(n^{\nu_2})$ \rev{almost surely \footnote{Throughout this paper, "almost surely" (abbreviated \emph{a.s.}) is used in its standard measure-theoretic sense: the event holds with probability $1$, i.e., the set of outcomes on which it fails has probability measure zero.}}\revnote{R1.D2}, i.e., there exists a constant
$C>0$ such that
\(
\Pr(t_1\le Cn^{\nu_2})=1.
\)
Equivalently, it suffices to show that for any fixed $C>0$,
\begin{align}
\Pr\!\left(|S'_{\lfloor Cn^{\nu_2}\rfloor}|<n^{\nu_1}\ \text{i.o.}\right)=0,    
\end{align}
where "i.o.'' (infinitely often) denotes the event
\begin{align}
\limsup_{n\to\infty}
\left\{|S'_{\lfloor Cn^{\nu_2}\rfloor}|<n^{\nu_1}\right\}
=
\bigcap_{n_0=1}^{\infty}\ \bigcup_{n\ge n_0}
\left\{|S'_{\lfloor Cn^{\nu_2}\rfloor}|<n^{\nu_1}\right\}.    
\end{align}
Let $t=\lfloor Cn^{\nu_2}\rfloor$. Since
\(
|S'_t|=\sum_{i=1}^t (\Delta S_i+1),
\)
we obtain the deterministic lower bound
\(
|S'_t|
\;\ge\;
t \cdot \min_{1\le i\le t}(\Delta S_i+1),
\)
which implies
\begin{align}
\Pr(|S'_t|<n^{\nu_1})
\;\le\;
\Pr\!\left(
\min_{1\le i\le t}\Delta S_i
<
\frac{n^{\nu_1}-t}{t}
\right).    
\end{align}
Since $\nu_1>\nu_2$, for sufficiently large $n$ we have $n^{\nu_1}>2t$, and hence
\begin{align}
\frac{n^{\nu_1}-t}{t}
\;\ge\;
\frac{1}{2C}\,n^{\nu_1-\nu_2}.    
\end{align}
Recall that $\Delta S_i\sim \mathrm{Bin}(|S_{i-1}|-1,\pi_i)$ with
\(
\mathbb{E}[\Delta S_i]=\pi_i(|S_{i-1}|-1).
\)
Moreover, the sampling probability admits the uniform lower bound
\(
\pi_i \ge \tfrac12 I_{0.75}\!\left(\tfrac{d+1}{2},\tfrac12\right).
\)
Applying the Chernoff bound~\cite{Chernoff}, for any $\delta\in(0,1)$,
\begin{align}
\Pr\!\big(\Delta S_i\le (1-\delta)\mathbb{E}[\Delta S_i]\big)
\;&\le\;
\exp\!\Big(-\tfrac{\delta^2}{2}\mathbb{E}[\Delta S_i]\Big)\\
\;&\le\;
\exp\!\Big(
-\tfrac{\delta^2}{4}\big(|S_{i-1}|-1\big)\,
I_{0.75}\!\left(\tfrac{d+1}{2},\tfrac12\right)
\Big).    
\end{align}
For all $i\le t$ such that $|S_{i-1}|\ge n/2$, we have
$\mathbb{E}[\Delta S_i]\ge c_0 n$ for some constant $c_0>0$.
Choosing $\delta\in(0,1)$ such that
\begin{align}
(1-\delta)\mathbb{E}[\Delta S_i]
\;\ge\;
\frac{1}{2C}\,n^{\nu_1-\nu_2},    
\end{align}
we obtain
\(
\Pr(|S'_t|<n^{\nu_1})
\;\le\;
\exp(-c n)
\)
for some constant $c>0$ and all sufficiently large $n$.
Consequently,
\begin{align}
\sum_{n=1}^{\infty}
\Pr\!\left(|S'_{\lfloor Cn^{\nu_2}\rfloor}|<n^{\nu_1}\right)
<\infty.    
\end{align}
By the Borel--Cantelli lemma (see Lemma~\ref{lem:BC} in Appendix~\ref{app:sup}),
\begin{align}
\Pr\!\left(|S'_{\lfloor Cn^{\nu_2}\rfloor}|<n^{\nu_1}\ \text{i.o.}\right)=0,    
\end{align}
which completes the proof of Part~(i).

Part (2).
Let $\nu\in(2/3,1)$ and define
\(
t_2:=\inf\{t\ge 0:\ |S'_t|\ge n-n^{1-\nu}\}.
\)
We show that $t_2=O(n^\nu)$ almost surely.
Fix $C>0$ and set $t=\lfloor Cn^\nu\rfloor$. As above,
\begin{align}
\Pr\!\left(|S'_t|<n-n^{1-\nu}\right)
\;\le\;
\Pr\!\left(
\min_{1\le i\le t}\Delta S_i
<
\frac{n-n^{1-\nu}-t}{t}
\right).    
\end{align}
For sufficiently large $n$, we have $t\le n/2$ and thus
\begin{align}
\frac{n-n^{1-\nu}-t}{t}
\;\ge\;
\frac{1}{2C}\,n^{1-\nu}.    
\end{align}
Using the same Chernoff bound and the same uniform lower bound on $\pi_i$,
we again obtain an exponentially small upper bound
\begin{align}
\Pr\!\left(|S'_t|<n-n^{1-\nu}\right)
\;\le\;
\exp(-c' n)    
\end{align}
for some constant $c'>0$.
Therefore,
\begin{align}
\sum_{n=1}^{\infty}
\Pr\!\left(|S'_{\lfloor Cn^\nu\rfloor}|<n-n^{1-\nu}\right)
<\infty.    
\end{align}
Applying the Borel--Cantelli lemma yields
\begin{align}
\Pr\!\left(|S'_{\lfloor Cn^\nu\rfloor}|<n-n^{1-\nu}\ \text{i.o.}\right)=0,    
\end{align}
and hence $\Pr(t_2=O(n^\nu))=1$.

\end{proof}
\vspace{-1pt} 
\subsection{Main Degree Bound}
\label{sec:main-degree}
Lemma~\ref{lem:sublinear} establishes that the initial phase rapidly processes a large fraction of points. However, prior empirical observations suggest that this rapid progress does not persist throughout the entire construction. The pruning process undergoes a \textit{significant slowdown} after the initial fast pruning phase, often reaching a \textit{plateau phase}. 

To analyze this plateau phase, we exploit a key fact: the submartingale \({|S'_t|}_{t\ge 0}\) has bounded per-step variance along the pruning trajectory. This enables us to apply the Differential Equation Method (DEM)~\cite{DEM1,DEM2,DEM3}, a standard technique for approximating bounded-variance stochastic processes by deterministic trajectories. DEM yields an ordinary differential equation that captures the expected evolution of \(|S'_t|\) in the plateau regime. Combining the initial fast-pruning bound from Lemma~\ref{lem:sublinear} with the DEM-based characterization of the plateau phase gives a unified view of the pruning dynamics.
We now sketch the proof of our main result, Theorem~\ref{theorem:max_degree}. \rev{Detailed proofs of all lemmas and theorems are provided in the full version of this work on our GitHub repository~\cite{github} and arxiv~\cite{Ma2025}.}
\begin{theorem}
\label{theorem:max_degree}
Given a dataset $P$ of $n$ points in $d$-dimensional space, with probability $1$, the maximum out-degree in the constructed SNG is bounded by $O(n^{2/3 + \epsilon})$ for any small constant $\epsilon > 0$.
\end{theorem}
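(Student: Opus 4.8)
I would fix a single source vertex $p$ and bound its out-degree $\deg(p)$; the maximum-degree statement then follows by a union bound over the $n$ choices of $p$, combined with a Borel--Cantelli argument in $n$ to upgrade a tail bound into the almost-sure (``with probability $1$'') asymptotic claim. The starting point is the combinatorial description of the SNG pruning rule. Processing the remaining points in order of increasing distance to $p$ and maintaining a neighbor set $N$, a candidate $q$ is admitted iff it is not occluded by any $r\in N$, i.e. $\|p-q\|\le\|r-q\|$ for every current neighbor $r$. Taking $p$ as the origin, this is the linear condition $\langle q,r\rangle\le\tfrac12\|r\|^2$, so each admitted neighbor $r$ permanently prunes the half-space of points closer to $r$ than to $p$. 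A law-of-cosines computation shows any two admitted neighbors subtend an angle $\ge 60^\circ$ at $p$; this deterministic geometric constraint, valid with probability $1$ in general position, is the backbone of all the counting.

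\textbf{Sequential (martingale) formulation.} I would expose the candidates $q_1,q_2,\dots$ in increasing distance and let $X_i\in\{0,1\}$ indicate that $q_i$ is admitted, with $D_i=\sum_{j\le i}X_j$ the running degree and $\mathcal F_i$ the $\sigma$-algebra generated by the first $i$ points. The crux is a quantitative occlusion estimate: conditioned on $\mathcal F_{i-1}$, when the current neighbor set has size $k=D_{i-1}$, each previously admitted neighbor at distance $\rho_r\le\|q_i\|=:\rho$ excludes a spherical cap of the radius-$\rho$ direction sphere of angular radius $\arccos(\rho_r/2\rho)\in[60^\circ,90^\circ)$, and $q_i$ is admitted only if its direction avoids all $k$ caps. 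Under the random-data model this yields a bound of the form $\Pr[X_i=1\mid\mathcal F_{i-1}]\le g(D_{i-1})$ with $g$ decreasing in the number of accumulated neighbors. Feeding this into a Doob decomposition, $D_i-\sum_{j\le i}g(D_{j-1})$ is a supermartingale, which both controls $\mathbb E[\deg(p)]$ and, via Azuma/Freedman-type concentration for the bounded increments, gives an exponential tail for $\deg(p)$ around its mean.

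\textbf{Where the exponent $2/3$ comes from.} The expected number of candidates that must be scanned to push the degree from $k$ to $k+1$ is $\approx 1/g(k)$; if the occlusion estimate gives $g(k)=\Theta(k^{-1/2})$ — i.e. the coverage from the first $k$ caps forces the acceptance probability to decay like $k^{-1/2}$ — then accumulating $K$ neighbors consumes $\sum_{k\le K}1/g(k)=\Theta(K^{3/2})$ candidates. Since at most $n-1$ candidates are ever available, $K^{3/2}=O(n)$, i.e. $\deg(p)=O(n^{2/3})$ in expectation. The extra $n^{\epsilon}$ slack in the theorem absorbs the logarithmic/polynomial factors from summing over distance shells and from the concentration deviation, as well as the union bound over the $n$ sources: choosing a deviation of order $n^{\epsilon}$ above the mean makes $\Pr[\deg(p)>Cn^{2/3+\epsilon}]$ superpolynomially small, so $n\cdot\Pr[\cdot]$ is summable in $n$ and Borel--Cantelli yields the uniform almost-sure bound.

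\textbf{Main obstacle.} The hard part is the quantitative occlusion estimate, i.e. showing that $k$ accumulated neighbors cover enough of the direction sphere that $g(k)$ decays at the required $k^{-1/2}$ rate, and doing so robustly in the ambient dimension $d$: a naive cap-packing bound is exponential in $d$ and useless here, so the argument must exploit the random placement of the candidates rather than worst-case neighbor directions. Pinning down the precise decay rate of $g$ — which is exactly what fixes the exponent at $2/3$ rather than, say, $1/2$ — and verifying that the increments of the Doob martingale are controlled tightly enough for the concentration step to beat the union bound, is where the real work lies.
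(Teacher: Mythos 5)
Your reduction of the out-degree to the number of admitted candidates, the $60^\circ$ separation property, and the supermartingale-plus-Borel--Cantelli scaffolding are all sound, and the route is genuinely different from the paper's: the paper never reasons about angular occlusion of later candidates by earlier neighbors. Instead it tracks the processed set $|S'_t|$, shows each iteration prunes each surviving candidate with probability bounded below by a dimension-dependent constant (Lemma~\ref{lem:probability}), concludes via Chernoff and Borel--Cantelli that $n-n^{1-\nu}$ points are dispatched within $O(n^{\nu})$ iterations (Lemma~\ref{lem:sublinear}), and then finishes the residual $n^{1-\nu}$ candidates with Wormald's differential equation method in $O(n^{1-\nu})$ further iterations. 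Crucially, the exponent $2/3$ in the paper arises from the constraint $\nu>2/3$ needed to make the DEM failure probability vanish, not from any $k^{-1/2}$ decay of an acceptance probability.

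This exposes the genuine gap in your proposal: the entire quantitative content --- the exponent itself --- rests on the hypothesis $g(k)=\Theta(k^{-1/2})$, which you state only conditionally (``if the occlusion estimate gives \dots'') and yourself flag as the main obstacle without establishing. As written, the argument is a consistency check that $k^{-1/2}$ decay would imply an $n^{2/3}$ bound; the exponent is reverse-engineered from the target rather than derived. There is also reason to doubt the hypothesis in the stated form: each admitted neighbor at distance $\rho_r\le\rho$ excludes a cap of angular radius $\arccos(\rho_r/2\rho)\in[60^\circ,90^\circ)$, whose normalized spherical measure ranges from exponentially small in $d$ (when $\rho_r\approx\rho$) up to nearly $1/2$ (when $\rho_r\ll\rho$), and the caps are strongly dependent on the random history; nothing in the setup singles out a $k^{-1/2}$ coverage rate, and a priori the acceptance probability could decay exponentially in $k$, remain bounded below until $k$ is exponential in $d$, or anything in between. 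Until $g$ is pinned down, the Doob decomposition and the Azuma/Freedman concentration step have nothing concrete to act on, so the proposal does not yet constitute a proof of the $O(n^{2/3+\epsilon})$ bound.
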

\begin{proof}[Proof sketch]
\color{blue}{
The proof consists of two parts: (i) establishing a uniform lower bound
\(\pi_t\ge c_*>0\) on the one-step pruning probability, and (ii) applying the
two-phase pruning analysis from Sections~\ref{subsec:2-phase}--\ref{subsec:highdim},
which combines Chernoff bounds, Borel--Cantelli, and Wormald's differential
equation method (DEM). Since the second part only uses the existence of such a
positive lower bound, the same argument applies once \(c_*\) is verified under
each distributional setting.
\revnote{R1.O2\\R2.D1\\R3.D2}

\medskip
\noindent\textbf{Case~1: uniform distribution and \(\alpha=1\).}
For \(\alpha=1\), the blocking condition
\(
\|p-x\|\ge \|p_t^*-x\|
\)
defines a spherical-cap region. By Lemma~\ref{lem:probability}, its normalized
volume is bounded below by a positive constant:
\(
\pi_t\ge c(d,1)>0.
\)
Let \(Y_t:=|S'_t|\). Since \(|S_t|=n-1-Y_t\),
\(
Y_{t+1}-Y_t=1+\Delta S_{t+1},
\)
and hence, conditionally on \(\mathcal{F}_t\),
\(
\mathbb{E}[Y_{t+1}-Y_t\mid\mathcal{F}_t]
\ge 1+c(d,1)(n-1-Y_t).
\)
For \(Z_t:=Y_t/n\), the normalized drift is therefore governed, up to lower-order
terms, by
\(
F(z)=1+c(d,1)(1-z).
\)
The DEM conditions hold: boundedness follows from
\(
|Y_{t+1}-Y_t|\le |S_t|\le n,    
\)
the trend condition follows from the above drift estimate, and \(F\) is Lipschitz since
\begin{align}
|F(z_1)-F(z_2)|=c(d,1)|z_1-z_2|.    
\end{align}
Thus, Wormald's DEM applies.

By Lemma~\ref{lem:sublinear}, for any \(\nu\in(2/3,1)\), the first phase reaches
\(
|S'_t|\ge n-n^{1-\nu}
\)
within \(t_1=O(n^\nu)\) iterations with probability one. At that point at most
\(n^{1-\nu}\) candidates remain. Applying the DEM-controlled plateau analysis
to the residual process gives \(O(n^{1-\nu})\) additional iterations. Therefore,
\begin{align}
T=O(n^\nu)+O(n^{1-\nu})=O(n^\nu).   
\end{align}
Taking \(\nu=2/3+\varepsilon\) yields
\(
T=O(n^{2/3+\varepsilon}).
\)

\medskip
\noindent\textbf{Case~2: uniform distribution and general \(\alpha>0\).}
For general \(\alpha\), the blocking condition becomes
\(
\|p-x\|\ge \alpha\|p_t^*-x\|,
\)
which defines an Apollonius-type blocking region. Its normalized volume is still
bounded below by a positive constant \(c(d,\alpha)>0\), so
\(
\pi_t\ge c(d,\alpha).
\)
The proof above uses the geometry only through this drift lower bound. Therefore
the same argument applies with \(c(d,1)\) replaced by \(c(d,\alpha)\), yielding
\(
T=O(n^{2/3+\varepsilon})
\)
for every fixed \(\alpha>0\).

\medskip
\noindent\textbf{Case~3: bounded non-uniform density.}
Suppose the local candidate density \(f\) satisfies \(0<m\le f\le M<\infty\).
Conditional on \(p_t^*\), the blocking region \(R_t(p_t^*)\) from Case~2 gives
\[
\pi_t
\ge
\frac{m}{M}
\frac{\operatorname{Vol}(R_t(p_t^*))}
     {\operatorname{Vol}(\Omega_{t-1})}
\ge
\frac{m}{M}c(d,\alpha)
=:c_*>0.
\]
Thus non-uniformity only changes the constant in the drift lower bound. The same
Chernoff, Borel--Cantelli, and DEM argument then gives
\(
T=O(n^{2/3+\varepsilon}).
\)

Since the out-degree is controlled by the number of pruning iterations, the
claimed degree bound follows. Detailed calculations for the \(\alpha\)-extension
and the bounded-density case are provided in Appendix~B.3 in \cite{Ma2025}.
}
\end{proof}
\section{Search Bounds}
\label{sec:path}
We now turn to the analysis of query processing on the constructed SNG. Our goal is to show that, with high probability, \textit{GreedySearch} converges to an approximate nearest neighbor in \(O(\log n)\) steps. 
\begin{theorem}
\label{theorem:path}
\rev{Given a dataset \( P \subset \mathbb{R}^d \) with \( n \) points distributed uniformly,} with probability $1$, the SNG search converges in $O(\log n)$ steps.
\end{theorem}\revnote{R1.O3}
\vspace{-0.2cm}
\begin{proof}[Sketch Proof]
Let $\{v_0, v_1, \dots, v_k\}$ denote a greedy search path from the start node $p = v_0$ to the approximate nearest neighbor of the query $q$. At each step, \textit{GreedySearch} moves to the neighbor closest to $q$, ensuring that $\|v_i - q\| > \|v_{i+1} - q\|$.

To analyze the expected path length $k$, we analyze the distribution of neighbors around the destination \(v_k\), they are uniformly distributed in a $d$-dimensional ball of radius \(\rho_0\). We partition this ball into concentric annular layers and let $\eta_i$ denote the number of neighbors lying in the $i$-th layer. Each $\eta_i$ follows a binomial distribution proportional to the shell volume, and the total number of neighbors is bounded by $\eta = O(n^{2/3+\epsilon})$ from Theorem~\ref{theorem:max_degree}.

Using volume arguments, we upper bound the expected total number of neighbors located within the ball of radius \(\|v_k-v_0\|\) by the expected number of dataset points that fall into the same region. Specifically, we show 
\begin{align}
\mathbb{E}[k \cdot \eta] \leq \frac{n (n+1)^{(d-1)/d} \rho_0^{d+1} \cdot E\left[ \left( (\rho_0 - \triangle r)/\rho_0 \right)^{d(k-1)} \right]}{\triangle r^{d+1} \cdot d},    
\end{align}
where $\triangle r$ is the minimum distinguishable pairwise distance difference among all point triples. Applying Jensen’s inequality~\cite{Chernoff}, we define an auxiliary function $g(x)$ 
\begin{align}
    g(x) := \eta \cdot x - \frac{n(n+1)^{(d-1)/d} \rho_0^{d+1}}{d \cdot \triangle r^{d+1}} \left[ -\left( \frac{\rho_0 - \triangle r}{\rho_0} \right)^{d(x-1)} + 2 \right],
\end{align}
whose root bounds $\mathbb{E}[k]$. We demonstrate $g(k) < 0$, note that \(g(x)\) is monotonically increasing, and $g(\log n) > 0$, which together imply that \(k< \log n\), hence $\mathbb{E}[k] = O(\log n)$.
\end{proof}

\section{Truncation Parameter Optimization}
\label{sec:algorithm_optimization}

In the preceding sections, we establish theoretical guarantees for the SNG structure, including sublinear degree growth and logarithmic search complexity. We now leverage these insights to address a practical issue in index construction: how to efficiently select the truncation parameter $R$, which directly controls the trade-off between construction cost and search performance.

\subsection{Complexity Analysis} 
\label{subsec:complexity}
We summarize the full SNG construction pipeline as implemented in practice, incorporating some widely used optimizations~\cite{NEURIPS2019,NSG19}. The process begins by initializing a randomized $R$-regular graph. For each data point $p$, a \textit{GreedySearch} procedure is performed from a fixed entry point (e.g., the dataset centroid) to collect an initial candidate set for pruning, with time complexity
\(
T_{\text{search}} = O(R \log n).
\)
Importantly, in practical implementations the initial candidate set is \emph{not} the entire dataset; instead, it consists only of the points returned by \textit{GreedySearch}. This design significantly reduces construction cost compared to the theoretical baseline that assumes full connectivity. In particular, the size of the initial candidate set satisfies $|S_0| = O(R \log n)$ in expectation.
The candidate set is then refined through iterative SNG pruning, where the truncation parameter $R$ bounds the number of pruning iterations. Let $|S_0|$ denote the initial candidate set size; the pruning cost is
\(
T_{\text{prune}} = O(R \cdot |S_0|).
\)
To further enhance graph connectivity and preserve monotonic searchability, reverse edges are inserted after pruning. These insertions may temporarily violate the degree constraint, triggering additional pruning steps. The amortized cost of reverse-edge handling can be modeled as
\(
T_{\text{reverse}} = O(\alpha \cdot R^3),
\)
where $\alpha$ is the pruning parameter controlling sparsification aggressiveness.

The computational cost of each stage—\textit{GreedySearch}, SNG pruning, and reverse-edge handling—can be analyzed independently. Aggregating these components yields the overall construction complexity, as expressed below.
\[
n\bigl(
C_1 \cdot R \log n + b_1
+ C_2 \cdot (R^2 \log n / \alpha) + b_2
+ C_3 \cdot (\alpha R^3) + b_3
\bigr),
\]
where $C_i$ and $b_i$ are implementation-dependent constants.

This expression highlights the competing effects of $R$ and $\alpha$: increasing $R$ improves graph quality but amplifies both pruning and reverse-edge costs.
 
\subsection{Optimization Principle}
\label{subsec:optimization}
As discussed in Section~\ref{sec:2.3}, $\alpha$ is typically fixed in the range $1\le\alpha\le2$, whereas the truncation parameter $R$ controls the trade-off between construction cost and search robustness: larger $R$ usually improves recall but increases the cost.

\rev{We determine $R$ for a target $\alpha$ using a \textit{marginal optimality principle}. From Section~\ref{subsec:complexity}, the construction cost can be summarized as
\(
n\bigl(C_1R\log n+C_2R^2\log n/\alpha+C_3\alpha R^3+O(1)\bigr),
\)
where the three terms correspond to initial candidate collection, SNG pruning, and reverse-edge handling, respectively. Treating $R$ as a function of $\alpha$, the dominant stationary condition is
\[
-C_2R^2\log n/\alpha^2+C_3R^3=0,
\]
which gives
\(
R^*\propto \log n/{\alpha^2}.
\)
Thus, the optimal truncation parameter grows logarithmically with $n$ and decreases quadratically with $\alpha$; the hidden constant absorbs the dimension, implementation constants, and lower-order terms. Algorithm~\ref{alg:determine_R} calibrates this constant by a single reference build at $R=n^{2/3}$ under $\alpha_1$, measures the average out-degree $\bar R$, sets $K'=\alpha_1^2\bar R/\log n$, and returns $R^*=K'\log n/\alpha_2^2$ for the target $\alpha_2$. This avoids costly parameter sweeping while using the theoretical degree bound in Theorem~\ref{theorem:max_degree} as a near-untruncated reference.}\revnote{R3.D3}

\begin{algorithm}[t]
\small
\SetVline
\SetKwInput{Input}{Input}
\SetKwInput{Output}{Output}
\caption{Optimization of Truncation Parameter $R$ of SNG}
\label{alg:determine_R}

\Input{test pruning parameter $\alpha_1$, target pruning parameter $\alpha_2$, number of data points $n$}
\Output{optimized truncation parameter $R^*$}

$R \leftarrow n^{2/3}$\tcp*{theoretical upper bound}

Build graph index using \texttt{SNG-Prune}$(\alpha_1, R)$\;

Compute the average out-degree $\bar{R}$ of the constructed graph\;

$K' \leftarrow \dfrac{\alpha_1^2 \cdot \bar{R}}{\log n}$\;

$R^* \leftarrow \dfrac{K' \cdot \log n}{\alpha_2^2}$\;

\Return{$R^*$}
\end{algorithm}

\section{Experiments}
\label{sec:experiments}
We compare a conventional parameter-sweeping pipeline with our \emph{analytical} parameter determination strategy, and assess both construction efficiency and query performance across different graph-based ANNS algorithms and representative datasets.
\subsection{Experimental Setup}
\subsubsection{Datasets}
We evaluate our method on six datasets with diverse dimensionalities and application domains, as summarized in Table~\ref{tab:datasets}. 

The first five datasets are standard benchmarks in the ANNS literature and cover a broad range of real-world applications, including image, audio, and text. \textbf{MSong}~\cite{msong} consists of audio feature vectors for music similarity search. \textbf{DEEP1M}~\cite{DEEP1M} and \textbf{SIFT1M}~\cite{DBLP:journals/pami/JegouDS11} are widely used image-retrieval benchmarks based on deep neural features and SIFT descriptors, respectively. \textbf{GIST1M}~\cite{DBLP:journals/pami/JegouDS11} contains high-dimensional GIST descriptors. \textbf{GloVe}~\cite{glove} provides word embeddings for semantic search. 
In addition, we include a synthetic dataset, \textbf{UNIFORM}, where points are generated uniformly at random in a unit hypercube.
\begin{table}
  \caption{Summary of Datasets Used in Experiments}
  \label{tab:datasets}
  \centering
  \begin{tabular}{l|cccc}
    \toprule
    Dataset & Cardinality & Dimension & Query Size & Data Type \\
    \midrule
    MSong     & 992,272   & 420  & 200  & Audio \\
    DEEP1M    & 1,000,000 & 256  & 1,000 &Image \\
    SIFT1M    & 1,000,000 & 128  & 10,000 & Image \\
    GIST1M    & 1,000,000 & 960  & 1,000 & Image \\
    GloVe     & 1,193,514 & 200  & 1,000 & Text \\
    UNIFORM   & 1,000,000 & 128  & 10,000 & Synthetic \\
    \bottomrule
  \end{tabular}
  \vspace{-0.1cm}
\end{table}
\subsubsection{Performance Metrics}
We measure both index-construction efficiency and query behavior using several indicators. \textbf{Index Construction Time} measures the end-to-end time to obtain a finalized graph index, which includes both the time spent on selecting construction parameters and the subsequent construction phase.

  For query-time evaluation, \textbf{Recall@10} measures the fraction of queries for which at least one true nearest neighbor appears in the top-10 retrieved results, formally defined as $\text{Recall@}k = |A \cap A'|/|A'|$, where $A$ is the retrieved set and $A'$ is the ground truth with $|A|=|A'|=k$. \textbf{Queries Per Second (QPS)} measures throughput as the number of queries processed per second and \textbf{Query Latency} reports the average time per query in milliseconds at different recall levels.

\subsubsection{Parameter Settings}
 For each fixed $\alpha$, we vary the search width $L$ (and analogous search-time controls) to generate recall-latency trade-off. For baseline methods that tune the truncation parameter $R$ via parameter sweeping, we employ a binary-search (or grid-search) strategy guided by a normalized 1:1 weighted loss over query recall and latency, which reflects standard practice in iterative tuning pipelines.  In contrast, our approach determines $R$ directly from the analytical relationship derived in Section~\ref{subsec:optimization}.

\subsubsection{Computing Environment}
All experiments are conducted on a workstation equipped with an Intel CPU, 64GB DDR4-3200 memory, and a 2TB NVMe SSD, running Ubuntu 22.04 LTS. We compile all code with GCC 11.4 and enable \texttt{-O3} optimizations.

\subsubsection{Compared Algorithms}
We evaluate three representative SNG-based ANNS algorithms and apply our optimization principle consistently across them. \textbf{Vamana}~\cite{NEURIPS2019}, which underlies DiskANN-style indexing, relies on SNG-like pruning and exposes the parameters $\alpha$ and $R$ that directly control graph sparsification and maximum out-degree. \textbf{NSG}~\cite{NSG19} constructs a navigable graph with a bounded out-degree constraint and typically requires careful tuning of $R$ to balance construction cost and search quality. \textbf{HNSW}~\cite{HNSW2020} builds a hierarchical multi-layer graph, where the construction-time candidate list size (e.g., \texttt{efConstruction}) plays a similar role in controlling construction overhead and the resulting graph connectivity. We denote the variants using our proposed parameter optimization strategy as \textbf{OPT-DiskANN}, \textbf{OPT-NSG}, and \textbf{OPT-HNSW}, and refer to their original implementations as \textbf{DiskANN}, \textbf{NSG}, and \textbf{HNSW}, respectively. 
\rev{\subsubsection{Additional baselines} To situate our method within the broader competitive landscape, we additionally compare against four recent state-of-the-art ANNS systems that span graph, hashing, and quantization paradigms: \textbf{VSAG}~\cite{DBLP:journals/pvldb/ZhongLJYCWSJGXLSSC25}, a versatile graph-based ANNS library supporting multiple index structures; \textbf{SymphonyQG}~\cite{SymphonyQG}, a recent graph-quantization method achieving high throughput on standard benchmarks; \textbf{DET-LSH}~\cite{hash2}, a locality-sensitive hashing method with dynamic encoding trees; and \textbf{RaBitQ}~\cite{RabitQ}, a quantization-based method with theoretical error bounds. These methods are not modified by OPT-SNG and they serve as external reference points for query performance.}\revnote{R3.D1}

\subsection{Index Construction Performance}
\begin{table}[!t]
  \caption{Index Construction Speedup of OPT-SNG}
  \label{tab:construction_speedup}
  \centering
  \small
  \begin{tabular}{l|ccc|c}
    \toprule
    \multirow{2}{*}{Dataset} & \multicolumn{3}{c|}{Speedup over Baseline} & \multirow{2}{*}{Avg} \\
    & Vamana & NSG & HNSW & \\
    \midrule
    SIFT1M   & 7.9$\times$ & 4.8$\times$ & 3.4$\times$ & 5.4$\times$ \\
    GIST1M   & 5.6$\times$ & 7.1$\times$ & 4.3$\times$ & 5.7$\times$ \\
    DEEP1M   & 6.4$\times$ & 6.4$\times$ & 4.1$\times$ & 5.6$\times$ \\
    MSong    & 4.2$\times$ & 5.3$\times$ & 2.6$\times$ & 4.0$\times$ \\
    GloVe    & 8.7$\times$ & 15.4$\times$ & 2.2$\times$ & 8.8$\times$ \\
    \midrule
    Average  & 6.6$\times$ & 7.8$\times$ & 3.3$\times$ & 5.9$\times$ \\
    \bottomrule
  \end{tabular}
    \vspace{-0.3cm}
\end{table}

We begin by comparing the end-to-end index construction time, including both parameter tuning and the subsequent graph building. Figure~\ref{fig:construction_time_alpha1} reports results across all datasets, where the stacked bars separate tuning time from construction time. Across the board, the most salient difference is the tuning component: parameter sweeping repeatedly triggers expensive partial or full constructions to probe candidate $R$ values, whereas our method largely removes this iterative overhead by computing $R$ analytically once the necessary reference statistics are obtained. As a result, the total construction time decreases substantially even when the actual build phase remains comparable.

Table~\ref{tab:construction_speedup} summarizes the speedups achieved over binary-search baselines. We observe consistent improvements across datasets and algorithms, ranging from $2.2\times$ to $15.4\times$, with an average speedup of $5.9\times$. The gains are particularly pronounced on \textbf{GloVe} , where NSG achieves a $15.4\times$ speedup, and on \textbf{SIFT1M}, where Vamana achieves $7.9\times$. These large gains occur because the binary-search pipeline typically requires more iterations to converge to a good truncation level on these datasets, leading to large cumulative tuning costs.
\begin{table}[t]
    \caption{Recall@10 comparison at fixed latency targets. Latency thresholds are dataset- and algorithm-specific, chosen to reflect typical operating points.}
  \label{tab:recall_comparison}
  \centering
  \small
  \begin{tabular}{l|cc|cc|cc}
    \toprule
    \multirow{2}{*}{Dataset} & \multicolumn{2}{c|}{Vamana} & \multicolumn{2}{c|}{NSG} & \multicolumn{2}{c}{HNSW} \\
    & Binary & OPT & Binary & OPT & Binary & OPT \\
    \midrule
    MSong    & 99.26& \textbf{99.32}& 99.79& \textbf{99.82}& 86.27& \textbf{91.85}\\
    DEEP1M   & 96.77& \textbf{98.76}& 97.94& \textbf{99.33}& 98.93& \textbf{99.71}\\
    SIFT1M   & 99.84& \textbf{99.90}& 99.64& \textbf{99.70}& 99.92& \textbf{99.99}\\
    GIST1M   & 99.63& \textbf{99.76}& 96.16& \textbf{96.79}& 98.20& \textbf{99.63}\\
    GloVe    & 88.91& \textbf{89.94}& 90.34& \textbf{91.83}& 92.97& \textbf{94.61}\\
    UNIFORM  & 98.78& \textbf{99.62}& 93.25& \textbf{95.96}& 96.53& \textbf{97.64}\\
    \bottomrule
  \end{tabular}
\vspace{-0.15cm}
\end{table}
\begin{figure}[!t]
    \centering
    \begin{subfigure}[b]{0.48\linewidth}
        \centering
        \includegraphics[width=0.98\linewidth]{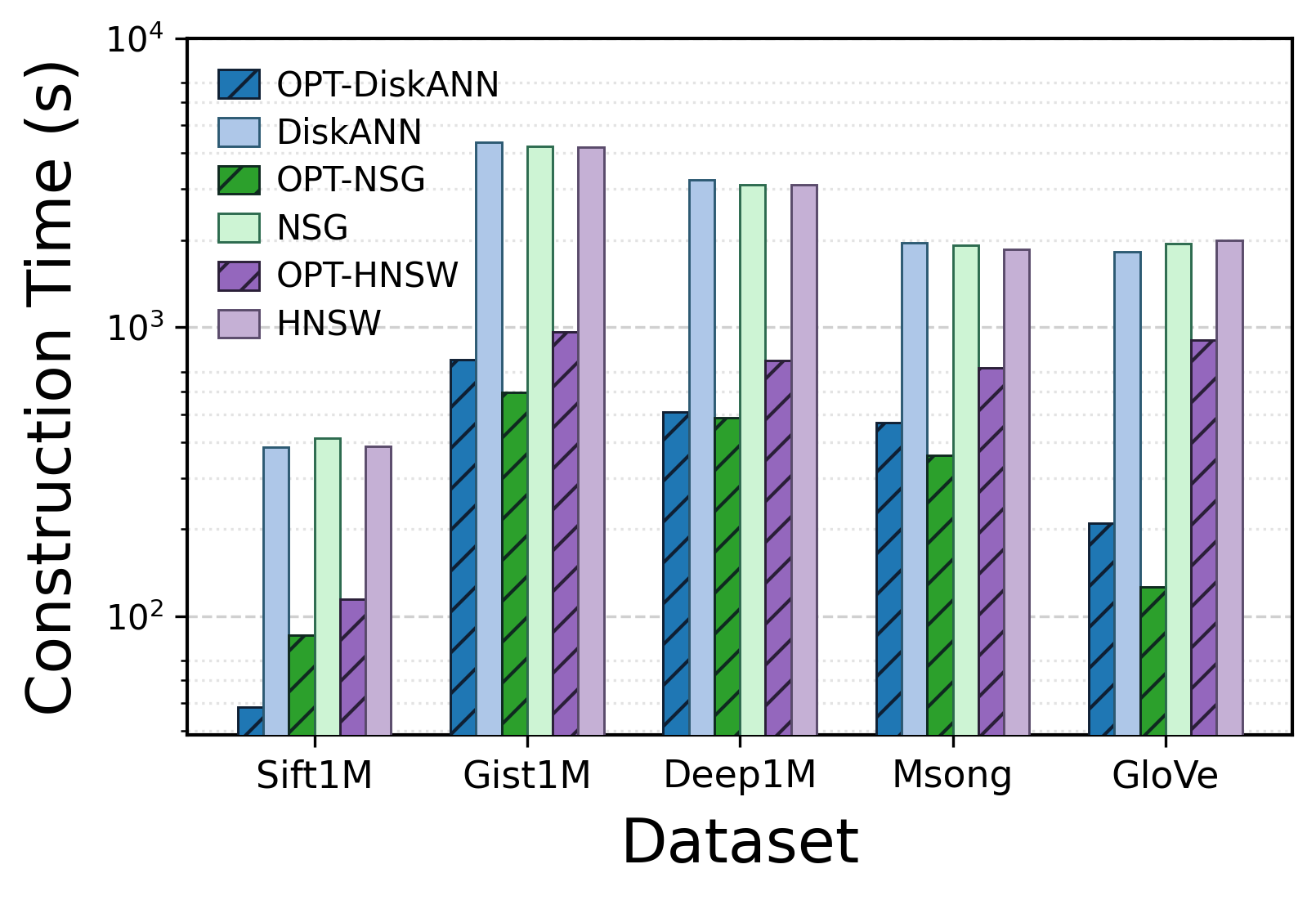}
        \caption{Index construction time comparison}
        \label{fig:construction_time_alpha1}
    \end{subfigure}%
    \hfill
    \begin{subfigure}[b]{0.48\linewidth}
        \centering
        \includegraphics[width=0.98\linewidth]{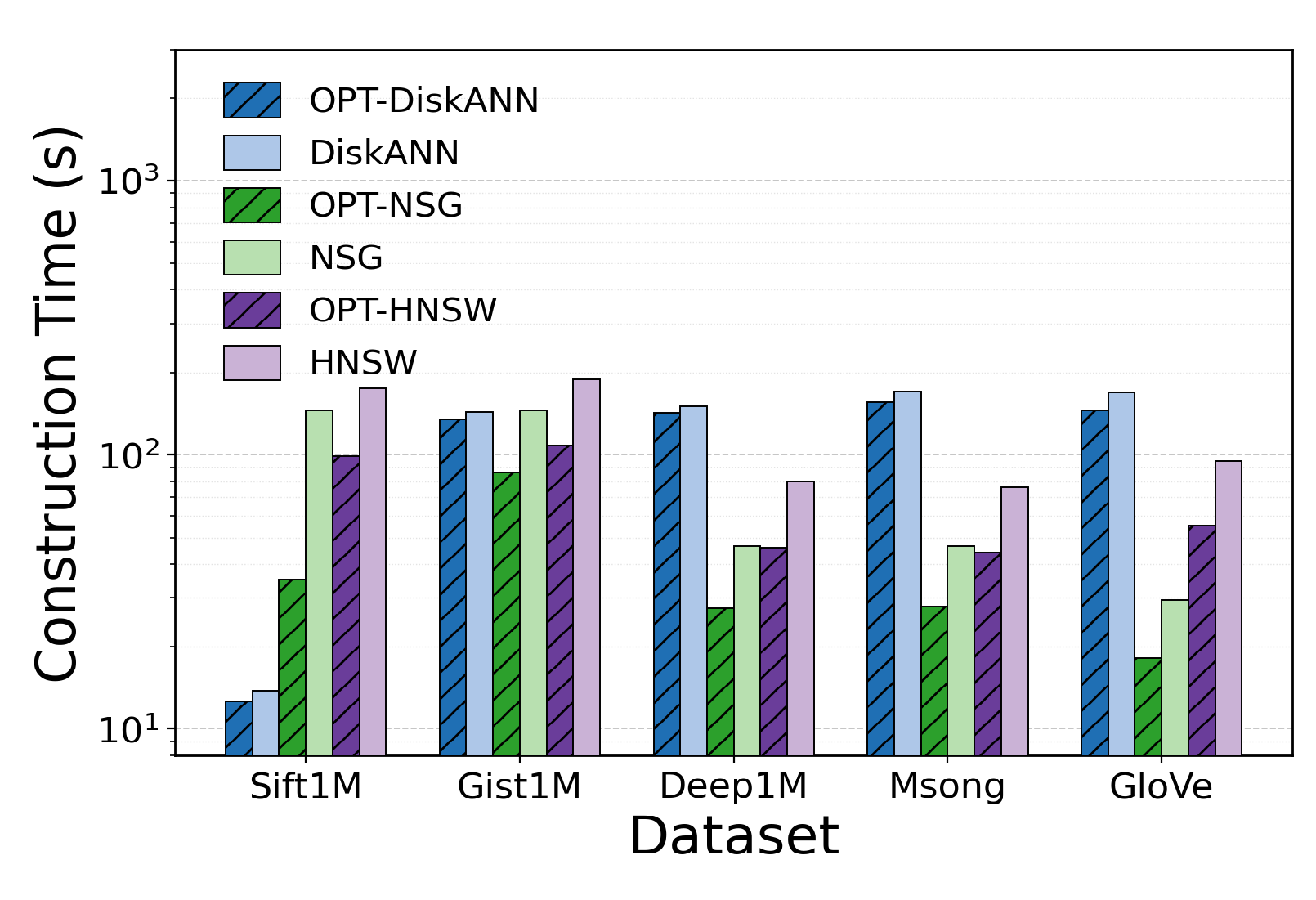}
        \caption{Ablation: Parameter reuse across datasets}
        \label{fig:ablation}
    \end{subfigure}
    \caption{Index construction time comparison and parameter-reuse ablation.}
    \label{fig:construction-ablation}
\end{figure}
\vspace{-0.2cm}

\subsection{Query Performance Evaluation}

We next evaluate whether the analytically chosen parameters (\texttt{OPT}) preserve query-time behavior compared to the parameters selected by parameter sweeping. We report recall and QPS trends by varying the search width $L$ under each $\alpha$.

Figures~\ref{fig:recall-qps} present recall versus QPS curves under different $\alpha$ settings. Across datasets, the curves produced by \texttt{OPT} closely track and in several cases slightly improve upon the curves of the original baseline, indicating that removing iterative tuning does not come at the expense of query quality. Intuitively, this behavior is consistent with our derivation: the analytical rule targets the better construction-quality regime that the parameter sweep attempts to locate empirically, but avoids repeatedly paying the cost of probing suboptimal intermediate choices.

To provide a compact comparison under fixed latency budgets, Table~\ref{tab:recall_comparison} summarizes the achieved Recall at representative, dataset and algorithm specific latency targets. \revnote{R1.O4}
The results show that \texttt{OPT} consistently matches or improves Recall@10 across all three algorithms and all datasets. The improvements are most pronounced on DEEP1M, MSong, GIST1M, and GloVe. 
\rev{This slight improvement comes from better calibrating the truncation
parameter $R$ with respect to the graph-structure parameter $\alpha$. When
$R\ll R^*$, truncation is too aggressive and removes useful SNG edges, reducing
connectivity and recall. When $R\gg R^*$, maintaining $R$-NN candidates and
reverse edges mainly introduces redundant edges and extra overhead, slowing
search without further recall gains. The analytically derived $R^*$ can
slightly improve Recall by avoiding locally suboptimal choices in discrete
parameter sweeps.}
Notably, HNSW exhibits the largest recall gains on SIFT1M, while on the challenging UNIFORM dataset, the most significant improvement is observed for NSG.

Figure~\ref{fig:uniform} shows that the UNIFORM dataset consistently yields lower throughput than SIFT1M across all evaluated algorithms at comparable recall levels. For example, at 99\% Recall@10, Vamana on SIFT1M achieves approximately $7\times$ higher QPS than on UNIFORM. Even under this challenging distribution, \texttt{OPT} consistently achieves higher QPS than the baseline at the same recall levels, demonstrating that our optimization remains effective and preserves query efficiency under uniform data without favorable geometric structure.

\rev{\noindent\textbf{Comparison with Additional Baselines.}\revnote{R3.D1,W1}
Figure~\ref{fig:baseline} also reports four further baselines, covering
the main families of recent ANNS work: two graph-based methods
(VSAG and SymphonyQG), a quantization-based method (RaBitQ), a
hashing-based method (DET-LSH).
Among them, VSAG and SymphonyQG are the most competitive, attaining
high QPS at high recall on SIFT1M and Deep1M; nevertheless, OPT-version
matches or surpasses their throughput across the full Recall--QPS
frontier on every dataset, while requiring substantially less index
construction time (on average \textbf{1.2}$\times$ less than VSAG and
\textbf{1.8}$\times$ less than SymphonyQG). The non-graph baselines
behave very differently: RaBitQ and DET-LSH degrade sharply once
recall enters the high-recall regime and never approaching the graph-based frontier. Taken together, the three OPT variants consistently dominate or match every baseline along the entire
Recall--QPS frontier.}

\subsection{\rev{Ablation: Parameter Reuse Across Builds}}\revnote{R2.D3}
\label{subsec:ablation_reuse}
{\color{blue}
We evaluate a parameter reuse scenario in which the baseline first sweeps $R$ on one dataset and then reuses the resulting value on subsequent builds of the same family, paying no further tuning cost on new data. As shown in Figure~\ref{fig:ablation}, OPT still consistently outperforms the reuse baselines across all five datasets (Sift1M, Gist1M, Deep1M, Msong, GloVe) and three index types (DiskANN, NSG, HNSW). The construction-time reduction averages $1.09\times$ for DiskANN, $1.65\times$ for NSG, and $1.74\times$ for HNSW, and is observed on every (dataset, method) pair. This confirms that the reported construction-time gains stem from improved
parameter quality, and that OPT-SNG still provides direct construction speedups
even after removing the effect of parameter-sweep overhead.
}
\begin{figure*}[t!]
    \centering    
    \includegraphics[width=0.92\linewidth]{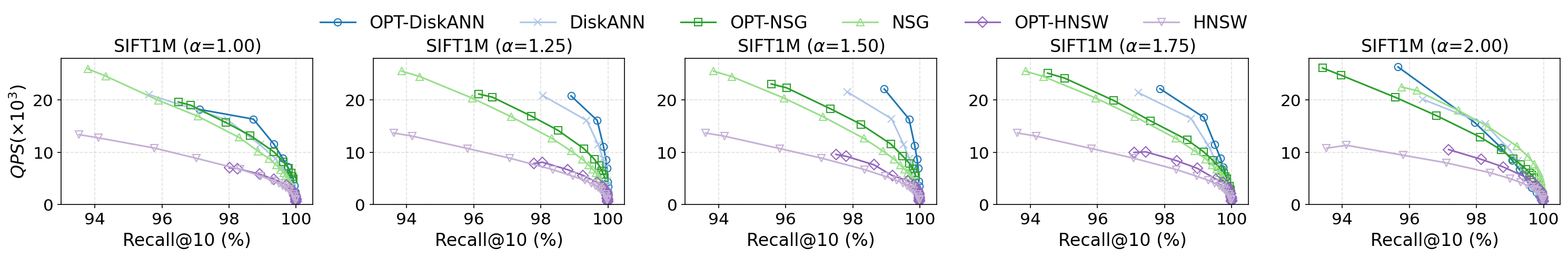}
    \includegraphics[width=0.92\linewidth]{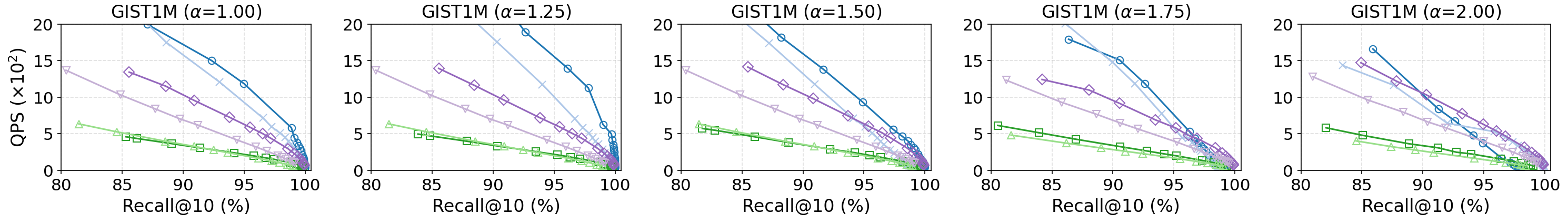}
    \includegraphics[width=0.92\linewidth]{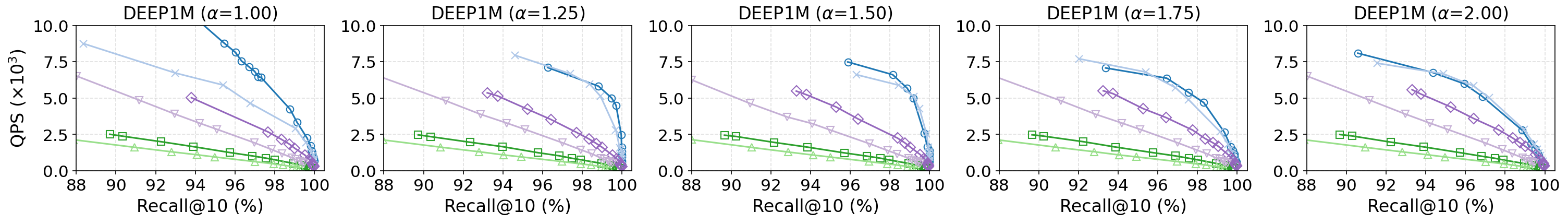}
    \includegraphics[width=0.92\linewidth]{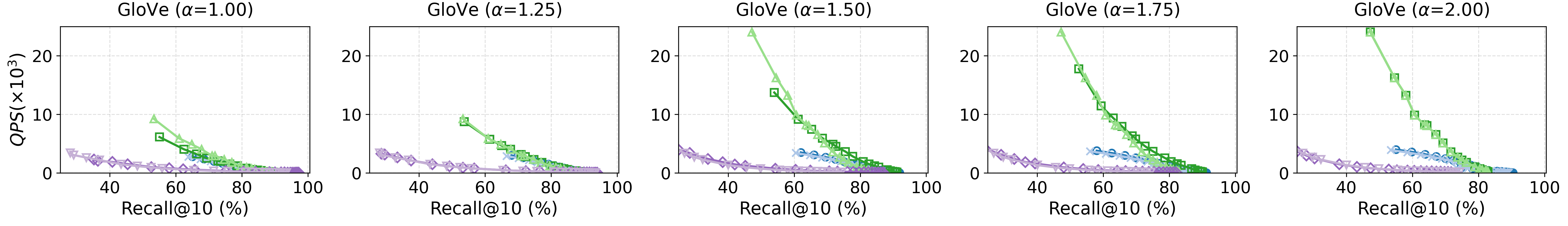}
    \includegraphics[width=0.92\linewidth]{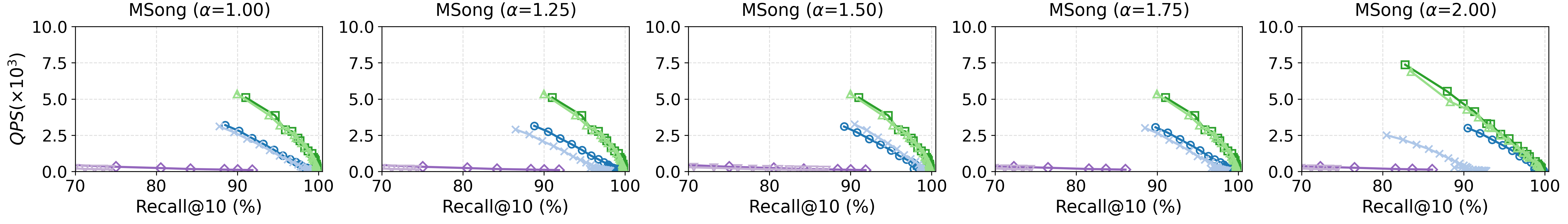}
    \vspace{-0.2cm}
    \caption{Query performance across varying \(\alpha\) values on real-world datasets}
    \label{fig:recall-qps}
    \vspace{-0.2cm}
\end{figure*}

\begin{figure*}[t]
    \centering
    \includegraphics[width=0.92\linewidth]{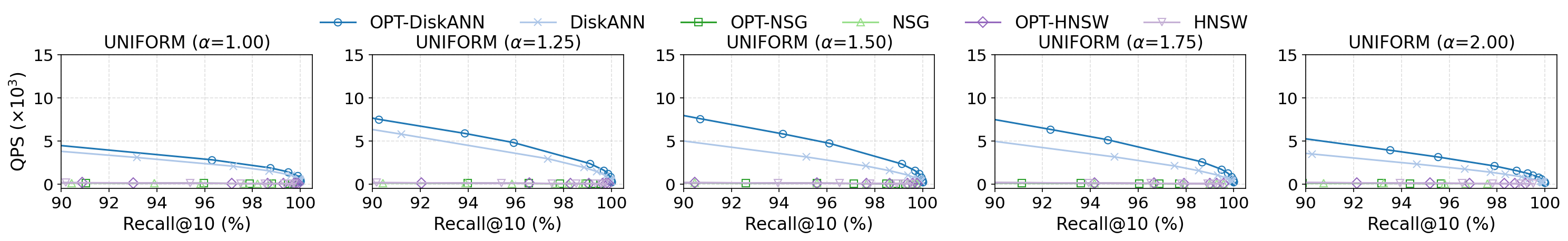}
    \vspace{-0.2cm}
    \caption{Query performance across varying \(\alpha\) values on UNIFORM}
    \label{fig:uniform}
\end{figure*}
 \vspace{-0.2cm}
\subsection{Effect of Pruning Parameter $\alpha$}
Figure~\ref{fig:recall-qps} illustrates the recall versus QPS curves under different $\alpha$ settings. 
 Across almost all datasets, increasing the pruning parameter $\alpha$ exhibits a consistent trend: for a fixed recall level, the achievable QPS initially increases as $\alpha$ grows, and then either saturates or slightly degrades when $\alpha$ becomes too large. We observe that $\alpha = 1.50$ provides the best overall balance across datasets, achieving the highest or near-highest QPS at comparable recall levels in most cases.

 The sensitivity to $\alpha$ varies across datasets. High-dimensional datasets such as GIST1M and DEEP1M exhibit more pronounced performance degradation at large $\alpha$ values, as overly sparse graphs struggle to preserve navigability in high-dimensional spaces. In contrast, lower- or medium-dimensional datasets such as SIFT1M, GloVe, and MSong remain robust across a wider range of $\alpha$ values, showing only mild performance variation.
 \vspace{-0.1cm}


\subsection{Scalability Study}
Finally, we evaluate the scalability of our approach by varying the dataset size and reporting the index construction time. In addition, we conduct experiments on the SIFT10M dataset by sampling subsets of size 2M, 4M, 6M, and 8M points. As shown in Figure~\ref{fig:scalar}, the construction speedups achieved by the \texttt{OPT} variants are consistently preserved across all sampled subsets, and the benefits become increasingly pronounced as the dataset size grows.

Furthermore, as illustrated in Figure~\ref{fig:sift10msearch}, the recall–QPS curves produced by \texttt{OPT} consistently match or outperform those obtained via parameter sweeping across all three algorithms, without compromising search quality. 
\begin{figure*}[t!]
    \centering    
    \includegraphics[width=0.99\linewidth]{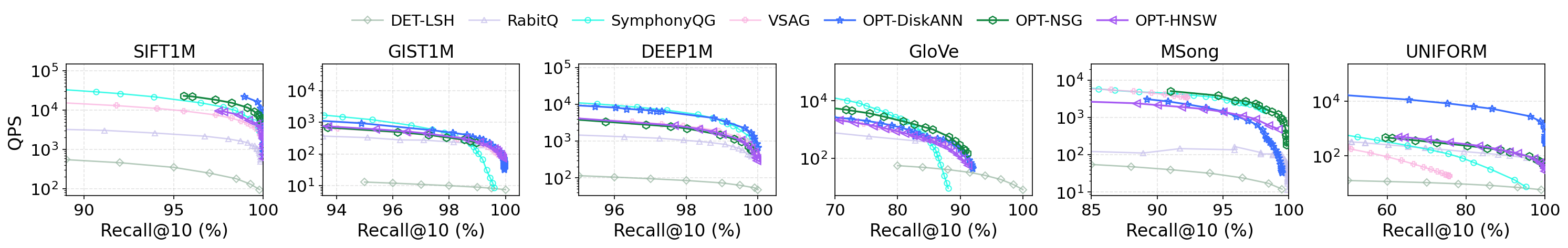}
     \vspace{-0.2cm} 
    \caption{Comparison with additional baselines on six datasets}
    \label{fig:baseline}
\end{figure*}
\begin{figure}[!t]
    \begin{minipage}{0.49\linewidth}
        \centering
        \includegraphics[width=0.96\linewidth]{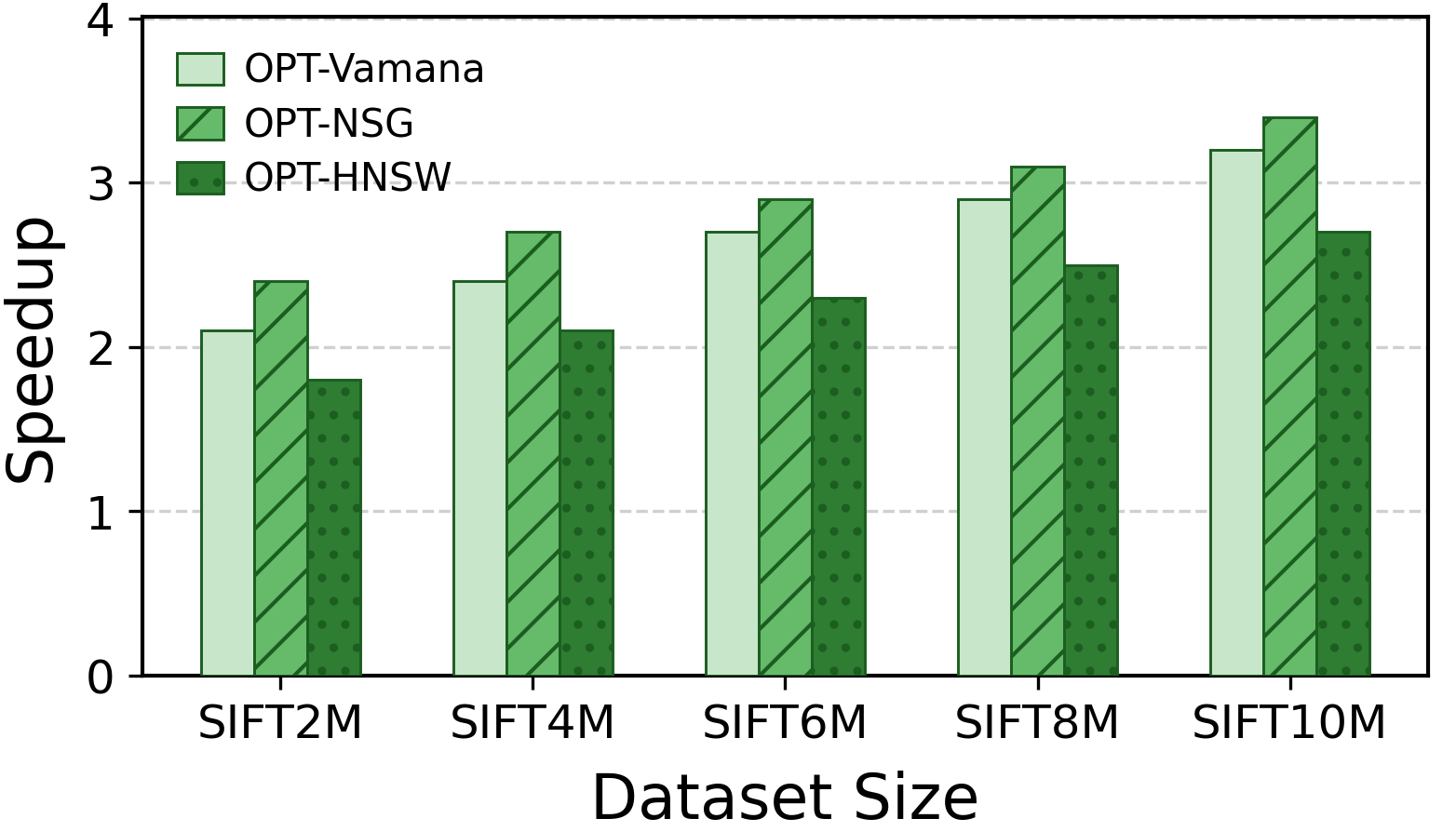}
        \vspace{-0.2cm}
        \caption{Scalability of index construction speedup}
        \label{fig:scalar}
    \end{minipage}
    \hfill
    \begin{minipage}{0.49\linewidth}
        \centering
        \includegraphics[width=0.96\linewidth]{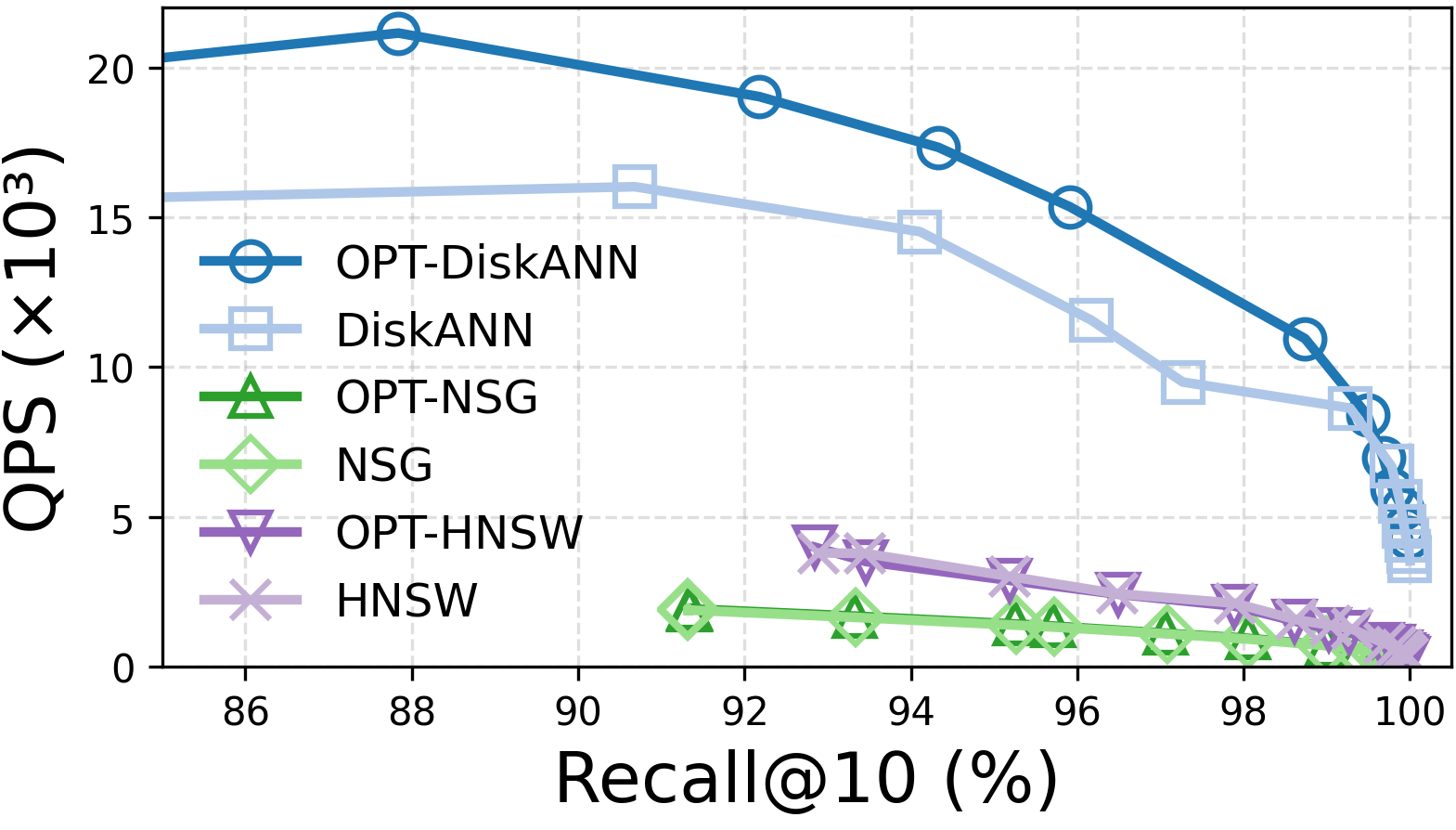}
          \vspace{-0.3cm}
        \caption{Query performance on SIFT10M}
        \label{fig:sift10msearch}
    \end{minipage}  
\end{figure}

{\color{blue}                              
To verify scalability beyond the million-scale setting, we evaluate OPT-Vamana 
on three standard billion-scale benchmarks: \textbf{SIFT1B} (1B vectors, 128-dim), 
\textbf{DEEP1B} (1B vectors, 96-dim), and \textbf{SpaceV1B} (1B vectors, 100-dim), 
following the experimental protocol used by recent large-scale ANNS evaluations 
such as~\cite{tribase}. All baselines and OPT variants are run on the same 
hardware described in Section~\ref{sec:experiments}, and we report results in 
Figure~\ref{fig:billion-sets}.
  
Across all three datasets, OPT-SNG preserves the construction-time advantage observed at the million-point scale. In particular, it achieves graph-construction speedups of \textbf{2.48}$\times$ on SIFT1B, \textbf{1.03}$\times$ on DEEP1B, and \textbf{1.07}$\times$ on SpaceV1B. Moreover, at fixed latency targets, OPT-SNG attains Recall@10 higher than that of the original version on all three datasets.
}                                     

\revnote{R1.O5, R3.W3}  \vspace{-0.2cm}\raggedbottom
\subsection{Summary}
\label{sec:summary}
The experimental results provide evidence supporting the effectiveness of the proposed analytical parameter optimization strategy. Across all evaluated datasets and SNG-based algorithms, our method substantially reduces index construction time by eliminating expensive iterative parameter sweeping. In particular, tuning overhead is reduced by more than an order of magnitude, leading to overall construction speedups of up to $15.4\times$ and an average speedup of $5.9\times$ across datasets.

Crucially, this acceleration does not come at the cost of query quality. The analytically determined parameters consistently preserve—and in many cases improve the recall and QPS compared to baseline parameter sweeping. Recall is matched or improved across all datasets and algorithms at representative latency targets. 
\section{Related Work}
\label{sec:related_work}
\begin{figure}[t]
    \centering
    \includegraphics[width=\linewidth]{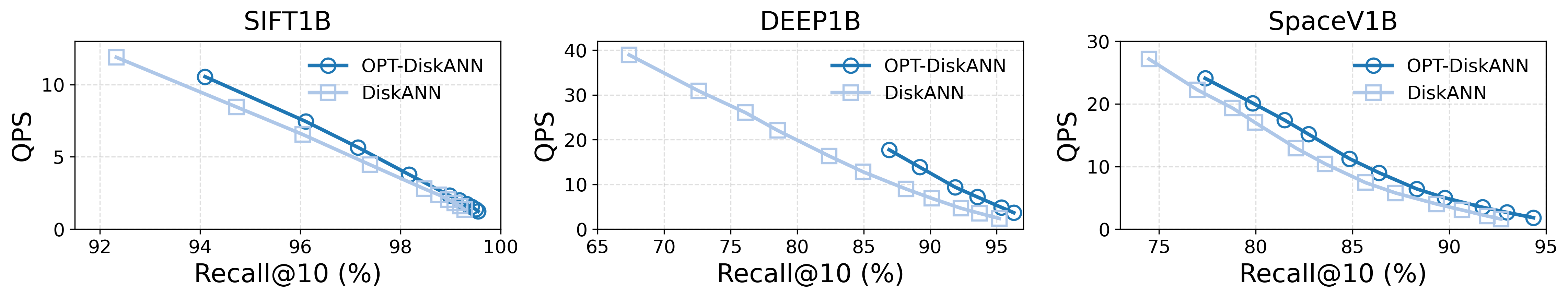}
    \vspace{-0.4cm}
    \caption{Query performance on billion datasets}
    \label{fig:billion-sets}
\end{figure}

\raggedbottom

\textbf{Approximate Nearest Neighbor Search.}
ANNS has evolved to support diverse scenarios, including multi-metric spaces
\cite{Multi}, hybrid queries \cite{ACORN}, privacy-preserving federated settings
\cite{FEDNNsig2024}, and retrieval-augmented generation \cite{RAG}. Existing
methods are commonly categorized into tree-based \cite{tree1,tree2},
hashing-based \cite{LSHarshing,hash2,hash3}, quantization-based
\cite{quantti1}, and graph-based approaches \cite{Graph1,Graphprob,Graph3}.
Tree-based methods, such as \(k\)-d trees~\cite{KD-TREE}, hierarchically
partition the space but suffer from the curse of dimensionality. Hashing-based
methods, such as LSH~\cite{LSHarshing}, map similar points to the same bucket
with high probability but often incur substantial memory overhead.
Quantization-based methods, including product quantization (PQ)
\cite{DBLP:journals/pami/JegouDS11} and optimized product quantization (OPQ)
\cite{OPQ}, compress vectors for efficient distance computation at the cost of
precision loss. Hybrid systems combining graph search with quantization, such
as DiskANN \cite{NEURIPS2019} and SPANN \cite{Graph2024VLDB}, further support
billion-scale search.

\textbf{Graph-Based ANNS Framework.}
Graph-based ANNS systems construct sparse proximity graphs to balance search
efficiency and index size. Representative systems include HNSW \cite{HNSW2020},
NSG \cite{NSG19}, KGraph~\cite{Kgraph}, DiskANN~\cite{NEURIPS2019},
DPG~\cite{DPG}, and SPTAG~\cite{SPANN2021}. Extensions such as
Filtered-DiskANN~\cite{gollapudi2023filtered} and FreshDiskANN~\cite{FRESH}
adapt this framework to filtered or dynamic large-scale settings, while recent
work explores cross-modal retrieval~\cite{chen2024roargraph} and filtered
search~\cite{ACORN,ait2025rwalks}.

Existing theoretical studies mainly analyze structural properties such as degree
bounds, query complexity, and search path length under simplified distributional
assumptions. Laarhoven~\cite{ICML2020} establishes sublinear query complexity
for threshold-based nearest neighbor graphs under spherical data models, where
sufficient local connectivity enables efficient greedy routing. Prokhorenkova
and Shekhovtsov~\cite{Haya24} analyze navigable graphs under a binomial
distribution and show that average degree \(O(n^\beta)\) for \(\beta<1/2\) can
preserve navigability, revealing a trade-off between graph sparsity and search
reliability.

\textbf{Parameter Tuning for ANNS Systems.}
Selecting optimal parameters for ANNS systems remains challenging. Traditional
parameter sweeping requires repeated index construction and evaluation, which is
computationally expensive. Automated approaches such as VDTuner~\cite{vdtuner}
use learning-based models to predict configurations. Our work takes a
complementary route by deriving analytical parameter-selection rules from the
pruning dynamics, thereby reducing the need for iterative search.
\section{Conclusions}
\label{sec:conclusion}
This paper revisits SNG-based graph indexing for approximate nearest neighbor
search from theoretical and optimization perspectives. We develop OPT-SNG, a
martingale-based probabilistic framework for modeling the stochastic pruning
dynamics in SNG construction. This framework enables a principled analysis of
graph structures generated by widely used systems. We show that the constructed
SNG has maximum out-degree \(O(n^{2/3+\epsilon})\) for any \(\epsilon>0\), and
that \textit{GreedySearch} converges in an expected \(O(\log n)\) steps, giving
an expected-case, cardinality-based complement to existing analyses.

Building on these guarantees, we derive a closed-form rule for selecting the
truncation parameter \(R\) as a function of \(\alpha\) and \(n\), together with
a lightweight calibration procedure for estimating the scaling constant.
Experiments on real-world benchmarks and a synthetic UNIFORM dataset show that
our analytical parameter selection removes expensive parameter sweeping,
reducing construction time by \(5.9\times\) on average while maintaining or
improving search performance.

\rev{An important direction for future work is to extend
Theorem~\ref{theorem:path} beyond the current uniform-distribution setting. A
rigorous extension would require re-establishing the volume-ratio inequalities
and drift conditions used in the proof under non-uniform distributions.}
\revnote{R1.O3}
\balance
\newpage
\bibliographystyle{ACM-Reference-Format}
\bibliography{sample-base}

\clearpage
\appendix

\section{Supplementary Lemmas}
\label{app:sup}
We first introduce a commonly used lemma in probability theory, which determines the probability of the limsup of events by analyzing the convergence of the series.
\label{app:lemma}
\begin{lemma}[Borel–Cantelli lemma]
\label{lem:BC}
Let $(\Omega, \mathcal{F}, \mathbb{P})$ be a probability space, where:
 $\Omega$ is the set of all possible outcomes, $\mathcal{F}$ is a sigma-algebra that contains all the information about events, and $\mathbb{P}$ is the associated probability measure.
Given a sequence of events $A_1, A_2, \ldots \in \mathcal{F}$ such that $\sum_{n=1}^{\infty} \mathbb{P}(A_n) < \infty$, it follows that:
$$
\mathbb{P}\left(\bigcap_{N=1}^{\infty}\left(\bigcup_{n \geq N} A_n\right)\right) = 0.
$$
Here, $\bigcap_{N=1}^{\infty}\left(\bigcup_{n \geq N} A_n\right)$ is also commonly denoted as $\limsup A_n$.
\end{lemma}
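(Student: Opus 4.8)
The plan is to exploit the two defining properties of the probability measure $\mathbb{P}$ — monotonicity and countable subadditivity — together with the elementary fact that the tail of a convergent series vanishes. Write $B_N = \bigcup_{n \geq N} A_n$, so that the target event is $\limsup A_n = \bigcap_{N=1}^{\infty} B_N$. The first observation I would record is the containment $\limsup A_n \subseteq B_N$ for every fixed $N$, which is immediate since an intersection over all $N$ is contained in any single term of that intersection.

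Next I would bound $\mathbb{P}(B_N)$ from above. Applying countable subadditivity to the union $B_N = \bigcup_{n \geq N} A_n$ gives $\mathbb{P}(B_N) \leq \sum_{n \geq N} \mathbb{P}(A_n)$. Combining this with the containment from the previous step and the monotonicity of $\mathbb{P}$ yields, for every $N$, the chain of inequalities $\mathbb{P}(\limsup A_n) \leq \mathbb{P}(B_N) \leq \sum_{n \geq N} \mathbb{P}(A_n)$. The crucial point is that this upper bound holds \emph{uniformly} over all $N$ before any limit is taken.

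To finish, I would invoke the hypothesis $\sum_{n=1}^{\infty} \mathbb{P}(A_n) < \infty$. Because the full series converges, its tail $\sum_{n \geq N} \mathbb{P}(A_n)$ tends to $0$ as $N \to \infty$. Since the inequality $\mathbb{P}(\limsup A_n) \leq \sum_{n \geq N} \mathbb{P}(A_n)$ holds for every $N$ and the right-hand side can be made arbitrarily small, letting $N \to \infty$ together with the nonnegativity of $\mathbb{P}$ forces $\mathbb{P}(\limsup A_n) = 0$, which is the claim.

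There is no genuine obstacle here: the argument is entirely classical, and the only point requiring minor care is that the bound must be obtained uniformly in $N$ so that one may pass to the limit directly, rather than appealing to continuity of measure. An alternative route would note that $\{B_N\}$ is a decreasing sequence of events and apply continuity from above — valid since $\mathbb{P}(B_1) \leq 1 < \infty$ — to obtain $\mathbb{P}(\limsup A_n) = \lim_{N \to \infty} \mathbb{P}(B_N)$, but the direct tail-bound argument above is cleaner and sidesteps invoking that continuity property.
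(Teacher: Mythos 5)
Your proof is correct and complete: the chain $\mathbb{P}(\limsup A_n) \leq \mathbb{P}\left(\bigcup_{n \geq N} A_n\right) \leq \sum_{n \geq N} \mathbb{P}(A_n)$ holds for every $N$, and the vanishing tail of the convergent series forces the left-hand side to be $0$. The paper states this classical lemma without proof, treating it as a known tool, and your tail-bound argument is precisely the standard textbook proof (your remark that continuity from above gives an alternative route is also accurate), so there is no discrepancy to report.
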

The following Chernoff bound provides a probabilistic guarantee on the deviation of a binomial random variable from its expected value.
\begin{lemma}[Chernoff Bound of binomial variable]
\label{lem:cheby}
Let $X \sim \operatorname{Bin}(n, p)$ and let $\mu=\mathbb{E}[X]$. For any $0<\delta<1$:\\
$$
\mathbb{P}(X \leq(1-\delta) \mu) \leq \exp \left(-\frac{\delta^2 \mu}{2}\right)
.$$
\end{lemma}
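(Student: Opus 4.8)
The plan is to prove this via the standard Chernoff method: apply an exponential form of Markov's inequality to the lower tail of $X$ and then optimize over the free exponential parameter. Writing $X = \sum_{i=1}^{n} X_i$ as a sum of independent Bernoulli$(p)$ indicators, I would first record the moment generating function: for any $t > 0$,
\[
\mathbb{E}\!\left[e^{-tX}\right] = \prod_{i=1}^{n}\mathbb{E}\!\left[e^{-tX_i}\right] = \bigl(1 - p(1 - e^{-t})\bigr)^{n}.
\]
Since the event $\{X \le (1-\delta)\mu\}$ coincides with $\{e^{-tX} \ge e^{-t(1-\delta)\mu}\}$ for $t > 0$, Markov's inequality applied to the nonnegative variable $e^{-tX}$ gives
\[
\mathbb{P}\bigl(X \le (1-\delta)\mu\bigr) \le e^{t(1-\delta)\mu}\,\bigl(1 - p(1 - e^{-t})\bigr)^{n}.
\]

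Next I would linearize the base using $1 - x \le e^{-x}$, so that $\bigl(1 - p(1-e^{-t})\bigr)^{n} \le \exp\bigl(-np(1-e^{-t})\bigr) = \exp\bigl(-\mu(1-e^{-t})\bigr)$, turning the bound into $\exp\!\bigl(\mu[\,t(1-\delta) - (1 - e^{-t})\,]\bigr)$. The exponent is a convex function of $t$ whose minimizer I locate by setting its derivative $(1-\delta) - e^{-t}$ to zero, i.e. $t^{*} = -\ln(1-\delta) > 0$, which is admissible precisely because $0 < \delta < 1$. Substituting $e^{-t^{*}} = 1-\delta$ collapses the estimate to the sharp relative-entropy form
\[
\mathbb{P}\bigl(X \le (1-\delta)\mu\bigr) \le \exp\!\Bigl(-\mu\bigl[\delta + (1-\delta)\ln(1-\delta)\bigr]\Bigr).
\]

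Finally I would weaken this sharp estimate to the stated clean form by establishing the scalar inequality $\delta + (1-\delta)\ln(1-\delta) \ge \delta^{2}/2$ on $[0,1)$. Setting $g(\delta) = \delta + (1-\delta)\ln(1-\delta) - \delta^{2}/2$, I would differentiate to obtain $g'(\delta) = -\ln(1-\delta) - \delta$ and $g''(\delta) = \delta/(1-\delta) \ge 0$; since $g(0) = g'(0) = 0$ and $g$ is convex on $[0,1)$, it follows that $g \ge 0$ throughout, which yields $\mathbb{P}(X \le (1-\delta)\mu) \le \exp(-\delta^{2}\mu/2)$. I expect the only genuinely delicate step to be this last scalar comparison: the Chernoff optimization produces the Kullback–Leibler-type exponent automatically, and the work lies in verifying that it dominates $\delta^{2}/2$ uniformly on the interval, which the two-derivative convexity argument settles cleanly. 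Everything else reduces to routine manipulation of the moment generating function and a one-variable minimization.
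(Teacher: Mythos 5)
Your proof is correct: the moment generating function computation, the Markov/optimization step yielding $t^{*}=-\ln(1-\delta)$ and the relative-entropy exponent $\delta+(1-\delta)\ln(1-\delta)$, and the convexity argument ($g(0)=g'(0)=0$, $g''(\delta)=\delta/(1-\delta)\ge 0$) showing this exponent dominates $\delta^{2}/2$ are all sound. The paper itself states this lemma as a standard auxiliary result without proof, and your derivation is exactly the canonical Chernoff argument one would supply for it, so there is nothing to reconcile.
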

The frame of Wormald’s differential equation method (DEM) was developed by Wormald in 1990s \cite{DEM1} as a powerful tool for analyzing the discrete-time randomized graph processes and algorithms. Given a discrete-time stochastic process, in particular, consider a submartingale, if the one-step differences are bounded and the expected differences are well-approximated by a Lipschitz function, it follows that the process closely aligns with the corresponding differential equation with high probability.
\begin{lemma}[Differential Equation method \cite{DEM1}] 
\label{lem DEM}
Given integers \(n \geq 1\), a bounded domain \(\mathcal{D} \subseteq \mathbb{R}^{a+1}\), a function \(F: \mathcal{D} \to \mathbb{R}\), and a sequence of sigma algebras \(\mathcal{F}_0 \subseteq \mathcal{F}_1 \subseteq \cdots\). Define random variables \(Y(i)\) such that \(Y(i)\) is \(\mathcal{F}_i\)-measurable for \(i \geq 0\).

Assume that for all \(i \geq 0\), when \((i / n, Y(i) / n) \in \mathcal{D}\), the following conditions are satisfied. Additionally, let \(\mathcal{D}\) contain the closure of the set \(\{(0, z_0): \mathbb{P}(Y(0) = z_0 n) \neq 0\}\) for some \(n\), meaning that \(\mathcal{D}\) includes all possible starting points. 

If \(Y(t)\) is a submartingale and the following three conditions hold:

\noindent
\textbf{(1) (Boundedness)} For some \(\beta = \beta(n) \geq 1\) and \(\gamma = \gamma(n)\), the probability that \(|Y(i+1) - Y(i)| \leq \beta\) given \(\mathcal{F}_i\) is at least \(1 - \gamma\). This ensures that the change in \(Y(i)\) over one step is controlled.\\
\noindent
\textbf{(2) (Trend)}  For some \(\lambda_1 = \lambda_1(n) = o(1)\), the gap between the conditional expectation of the change and the normalized value of the function \(F\) is insignificant:  
   \[
   \left| \mathbb{E}(Y(i+1) - Y(i) \mid \mathcal{F}_i) - F\left(\frac{i}{n}, \frac{Y(i)}{n}\right) \right| \leq \lambda_1.
   \]\\
\noindent
\textbf{(3) (Lipschitz) }\(F\) is continuous and Lipschitz with constant \(L\) on \(D\).\\
\noindent
Then we have:\\
\noindent
(a) For \((0, \hat{z}_0) \in \mathcal{D}\), the differential equation  
\[
\frac{dz}{dx} = F(x, z)
\]  
has a unique solution in \(\mathcal{D}\) with the initial condition \(z(0) = \hat{z}_0\). This solution can be extended close to the boundary of \(\mathcal{D}\).\\
\noindent
Let \(\lambda > \lambda_1 + C_0 n \gamma\), where \(\lambda = o(1)\). Then, with probability  
\[
1 - O\left(n \gamma + \frac{\beta}{\lambda} \exp\left(-\frac{n \lambda^3}{\beta^3}\right)\right),
\]  
the following holds:  
\[
Y(i) = n z\left(\frac{i}{n}\right) + O(\lambda n),
\]  
where \(z(x)\) is the solution from part (a) with the initial condition \(\hat{z}_0 = \frac{1}{n} Y(0)\). This result is valid uniformly for \(0 \leq i \leq \sigma n\), where \(\sigma = \sigma(n)\) is the supremum of \(x\) such that the solution \(z(x)\) remains within \(\mathcal{D}\).
\end{lemma}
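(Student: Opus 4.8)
The plan is to separate the statement into its deterministic part (a) and its probabilistic part (b), and attack each with its own machinery. Part (a) is a pure ODE claim: since $F$ is continuous and Lipschitz with constant $L$ on the bounded domain $\mathcal{D}$, the existence and uniqueness of a solution $z(x)$ with $z(0)=\hat{z}_0$, extendable up to the boundary of $\mathcal{D}$, follows directly from the Picard–Lindel\"of (Cauchy–Lipschitz) theorem. I would dispatch this by the standard contraction-mapping argument applied to the integral form $z(x)=\hat{z}_0+\int_0^x F(s,z(s))\,ds$, with the contraction constant controlled by $L$ and the (bounded) length of the interval on which $z$ stays inside $\mathcal{D}$. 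This step is routine and I would not belabor it.

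For part (b), the core object is the error $e_i := Y(i) - n\,z(i/n)$, and the goal is to show $|e_i| = O(\lambda n)$ with the stated probability, uniformly in $0\le i\le \sigma n$. First I would compute the one-step drift. Using the Trend condition, $\mathbb{E}[Y(i+1)-Y(i)\mid\mathcal{F}_i] = F(i/n, Y(i)/n)+O(\lambda_1)$, while a Taylor expansion together with the ODE gives $n\bigl(z((i+1)/n)-z(i/n)\bigr) = F(i/n, z(i/n)) + O(1/n)$. Subtracting yields
\[
\mathbb{E}[e_{i+1}-e_i\mid\mathcal{F}_i] = F\!\left(\tfrac{i}{n},\tfrac{Y(i)}{n}\right) - F\!\left(\tfrac{i}{n}, z(\tfrac{i}{n})\right) + O(\lambda_1 + 1/n),
\]
and the Lipschitz hypothesis bounds the leading term by $(L/n)\lvert e_i\rvert$. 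This is precisely the discrete analogue of Gr\"onwall's inequality: it forces any accumulated error to grow at most like $e^{Lx}$ times its seed value plus the integrated noise, which is what keeps the trajectory pinned to $n\,z(i/n)$. The submartingale hypothesis on $Y$ reinforces that this drift pushes $e_i$ in a controllable direction rather than away from the trajectory.

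The random fluctuations are then handled by a martingale concentration argument. I would write $Y(i)-\mathbb{E}[Y(i)\mid\cdots]$ as a sum of martingale differences, each bounded by $\beta$ except on an event of probability at most $\gamma$ per step, which a union bound absorbs into the $O(n\gamma)$ term. Applying an Azuma–Hoeffding bound over a block of steps — or, more cleanly, constructing an exponential supermartingale $\exp\bigl(\theta(\lvert e_i\rvert - \text{envelope})\bigr)$ and invoking a maximal inequality — produces deviations of order $\beta\sqrt{\text{block length}}$ with a subgaussian tail of the form $\exp(-n\lambda^3/\beta^3)$. The delicate bookkeeping is choosing the block length (and the exponential parameter $\theta$) so that the deterministic discretization error and the stochastic fluctuation combine into a single $O(\lambda n)$ window, with failure probability matching the claimed $1-O\bigl(n\gamma + (\beta/\lambda)\exp(-n\lambda^3/\beta^3)\bigr)$, together with a union bound over all blocks (or starting indices) covering $0\le i\le\sigma n$.

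The hard part will be making the Gr\"onwall-type amplification and the martingale concentration interact correctly. Because the Lipschitz drift inflates errors exponentially in $x$, the concentration must be strong enough that, even after multiplication by the $e^{Lx}$ factor, the fluctuation still fits inside the $O(\lambda n)$ band uniformly over the whole trajectory. Verifying that the hypothesis $\lambda > \lambda_1 + C_0 n\gamma$ is exactly the slack needed to close this loop — so that the deterministic error $O(\lambda_1 + C_0 n\gamma)$ accumulated over $O(n)$ steps never exceeds the envelope the concentration bound can sustain — is the most technically involved step, and it is where the precise form of the probability bound is pinned down.
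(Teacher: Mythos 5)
This lemma is Wormald's differential equation method, which the paper imports verbatim from the cited reference \cite{DEM1} and does not prove; there is no ``paper's own proof'' to compare against. Your outline is nevertheless a faithful sketch of the standard argument from that source: Picard--Lindel\"of for part (a); for part (b), the comparison of the conditional one-step drift with the ODE increment, the Lipschitz-driven discrete Gr\"onwall recursion, block decomposition with block length of order $n\lambda/\beta$ so that Azuma--Hoeffding yields the $\exp(-n\lambda^3/\beta^3)$ tail over the $O(\beta/\lambda)$ blocks, and absorption of the $\gamma$-exceptional steps into the $O(n\gamma)$ term. Two remarks. First, the submartingale hypothesis that the paper adds to its restatement plays no role in the classical proof (Wormald's theorem holds without it), so your appeal to it as ``reinforcing'' the drift control is decorative rather than load-bearing; you could drop it. Second, what you present is a plan, not a proof: the interaction you correctly identify as the hard part --- choosing $\theta$ and the block length so that the $e^{L\sigma}$ amplification of the per-block error still fits inside the $O(\lambda n)$ envelope, and verifying that $\lambda > \lambda_1 + C_0 n\gamma$ is exactly the needed slack --- is asserted rather than executed. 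Since the paper itself only cites the result, a full execution is not required here, but be aware that the bookkeeping in that step is where essentially all the work of the actual proof lives.
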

We will also use Jensen's inequality in the proof. It establishes a fundamental relationship between the expectation of a convex transformation of a random variable and the transformation of its expectation.
\begin{lemma}[Jensen's Inequality]
\label{lem:Jensen}
Suppose $g$ is convex and $X$ and $g(X)$ are both integrable. Then
$$
g(\mathbb{E} X) \leq \mathbb{E} g(X)
$$
\end{lemma}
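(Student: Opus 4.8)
The plan is to prove Jensen's inequality via the \emph{supporting line} characterization of convexity, which reduces the statement to a single pointwise bound followed by an application of linearity of expectation. First I would fix the point $x_0 = \mathbb{E}[X]$, which is well defined and finite since $X$ is assumed integrable. The central structural fact I would invoke is that every convex function $g$ on $\mathbb{R}$ admits a supporting line at each interior point of its domain: there exists a constant $c$ (any subgradient of $g$ at $x_0$, for instance a value lying between the left and right derivatives, both of which exist by convexity) such that
$$g(x) \geq g(x_0) + c\,(x - x_0) \quad \text{for all } x.$$

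With this supporting line in hand, the second step is to substitute the random variable $X$ for $x$, yielding the pointwise (almost sure) inequality
$$g(X) \geq g(\mathbb{E}[X]) + c\,(X - \mathbb{E}[X]).$$
I would then take expectations of both sides. The integrability of $g(X)$ guarantees the left side is finite, and the integrability of $X$ guarantees the right side is finite, so the inequality is preserved under $\mathbb{E}[\cdot]$. By linearity,
$$\mathbb{E}[g(X)] \geq g(\mathbb{E}[X]) + c\,\bigl(\mathbb{E}[X] - \mathbb{E}[X]\bigr) = g(\mathbb{E}[X]),$$
since the cross term vanishes identically. This is exactly the claimed bound.

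The main obstacle is establishing the existence of the supporting line, i.e.\ the subgradient $c$ at $x_0$. The cleanest route is to show that for convex $g$ the difference quotient $\frac{g(x) - g(x_0)}{x - x_0}$ is monotone nondecreasing in $x$, which forces the one-sided derivatives $g'_-(x_0)$ and $g'_+(x_0)$ to exist and satisfy $g'_-(x_0) \leq g'_+(x_0)$; any $c$ in that closed interval then defines a valid supporting line. The only subtlety beyond this is ensuring $x_0 = \mathbb{E}[X]$ lies in the interior of the set where $g$ is finite, which again follows from integrability of $X$ together with finiteness of $\mathbb{E}[g(X)]$; a boundary value of $x_0$ can be treated by a limiting argument or dismissed as a degenerate case. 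Everything downstream of the supporting line is routine.
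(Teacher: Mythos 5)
Your proof is correct: the supporting-line (subgradient) argument is the standard proof of Jensen's inequality, and your handling of the existence of the subgradient via monotone difference quotients, the pointwise bound, and the vanishing cross term after taking expectations is all sound. The paper states this lemma without proof as a standard supplementary fact, so there is no alternative argument to compare against; your proposal supplies exactly the canonical derivation one would expect.
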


\section{Omitted Proofs}\label{app:proofs}
\subsection{Proof of Lemma~\ref{lem:probability}}
\label{app:probability}
The following is the complete proof of Lemma~\ref{lem:probability}.
\begin{proof}[Proof of Lemma~\ref{lem:probability}]
\begin{figure}[H]
    \centering
\includegraphics[width=0.45\linewidth]{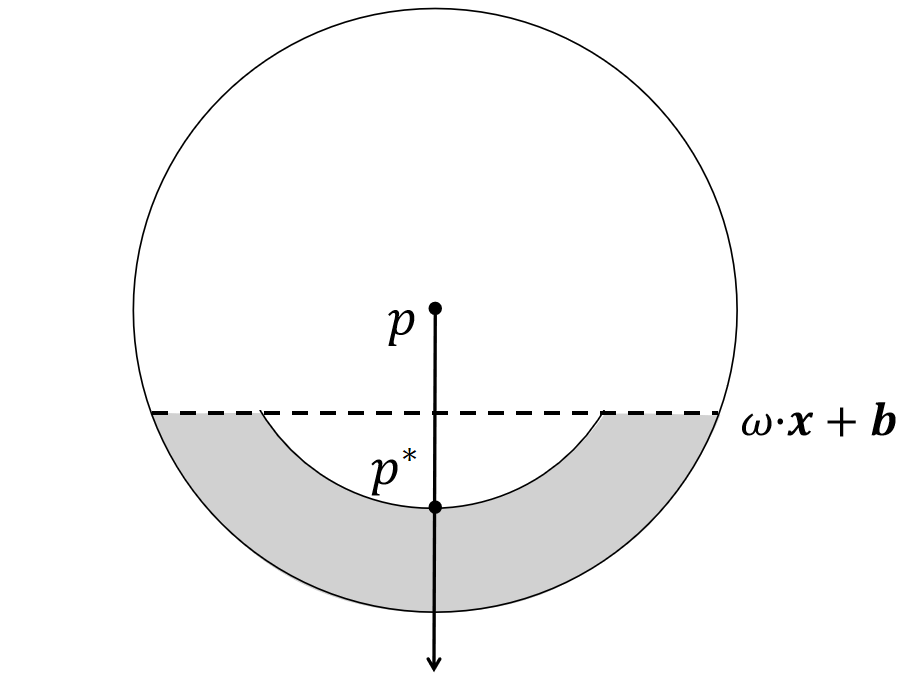}
    \caption{Given that $p^{*}$ is the nearest point to $p$ in $t$-th iteration and $\|p-p^*\|=\rho_t$. The light grey region represents all \(p'\) satisfied pruning condition \( \|p-p'\| > \|p^*- p'\| \).}
    \label{fig:prob}
\end{figure}
Assuming a uniform distribution over the search space, the probability that a randomly sampled point falls within a region is proportional to the region's volume. Without loss of generality, we place the center point $p$ at the origin and align the vector $\overrightarrow{pp^*}$ with the last coordinate axis, as shown in Figure \ref{fig:prob}. Under this coordinate system, the nearest neighbor $p^*$ lies at the position $(0, 0, \dots, 0, \rho_t) \in \mathbb{R}^d$. Let $p^{*}$ denote the nearest point to $p$ in $t$-th iteration and $\|p-p^*\|=\rho_t$, then the hyperplane is defined by:
$$
\omega \cdot x + b = 0, \quad \text{where } \omega = (0, \dots, 0, 1), \quad b = -\frac{\rho_t}{2}.
$$
The pruned region at iteration \(t\), denoted by $D(t)$, is defined as the intersection of the ball $B(p,\rho_0)$  and the half-space $\omega \cdot x > \frac{\rho_t}{2}$.
Let $V(\cdot)$ denote the Lebesgue volume in the appropriate dimension. For example, $V(B_d(r))$ is the volume of a $d$-dimensional ball of radius $r$, and $V(B_{d-1}(r))$ is the volume of a $(d-1)$-dimensional ball of radius $r$. In particular, $V(B_d(1))$ denotes the volume of the $d$-dimensional unit ball:
$$
V(B_d(1)) = \frac{\pi^{d/2}}{\Gamma(1 + d/2)}.
$$
To compute $V(D(t))$, we slice the ball orthogonally along the last coordinate $x_d$, and integrate over the $(d-1)$-dimensional cross-sectional volumes. The volume of the pruned region is given by:
$$
V(D(t)) = \int_{x_d = \frac{\rho_t}{2}}^{\rho_0} V\left( B_{d-1}\left( \sqrt{\rho_0^2 - x_d^2} \right) \right) \, dx_d,
$$
where $B_{d-1}(\sqrt{\rho_0^2 - x_d^2})$ is the $(d-1)$-dimensional ball of radius $\sqrt{\rho_0^2 - x_d^2}$, corresponding to the horizontal slice at height $x_d$.
Using the change of variables $t = x_d / \rho_0$, we have $x_d = \rho_0 t$ and $dx_d = \rho_0 \, dt$. Substituting this into the integral yields:
$$
V(D(t)) = V(B_{d-1}(1)) \cdot \rho_0^d \int_{\frac{\rho_t}{2\rho_0}}^1 (1 - t^2)^{\frac{d-1}{2}} \, dt.
$$
Next, we apply the substitution $z = 1 - t^2$ to convert the integral into a standard form. Under this change of variables, we obtain:
$$
\int_{\frac{\rho_t}{2\rho_0}}^1 (1 - t^2)^{\frac{d-1}{2}} \, dt = \frac{1}{2} \int_0^{1 - \left( \frac{\rho_t}{2\rho_0} \right)^2} z^{\frac{d-1}{2}} (1 - z)^{-1/2} \, dz.
$$
This integral corresponds to the incomplete Beta function. After normalization, it becomes proportional to the \textit{regularized incomplete Beta function}, also known as the cumulative distribution function of the Beta distribution, denoted by $I_x(a, b)$. Thus, we have:
$$
\int_{\frac{\rho_t}{2\rho_0}}^1 (1 - t^2)^{\frac{d-1}{2}} \, dt = \frac{\Gamma\left( \frac{d+1}{2} \right)\Gamma\left( \frac{1}{2} \right)}{2\Gamma\left( \frac{d}{2} + 1 \right)} \cdot I_{1 - \left( \frac{\rho_t}{2\rho_0} \right)^2} \left( \frac{d+1}{2}, \frac{1}{2} \right),
$$
where $\Gamma(\cdot)$ denotes the Gamma function. Hence, the volume of the pruned region is:
$$
V(D(t)) =\frac{\Gamma(\frac{d+1}{2})\Gamma(\frac{1}{2})}{2\Gamma(\frac{d}{2}+1)} V(B_{d-1}(1)) \cdot \rho_0^d \cdot I_{1 - (\frac{\rho_t}{2\rho_0})^2} \left( \frac{d+1}{2}, \frac{1}{2} \right).
$$
However, since this includes points within the inner ball $B(\rho_t)$, and we subtract the volume within $B(\rho_t)$, denoted \(V_{\text{inner}}\), is: 
$$
V_{\text{inner}} = \frac{\Gamma(\frac{d+1}{2})\Gamma(\frac{1}{2})}{2\Gamma(\frac{d}{2}+1)}V(B_{d-1}(1)) \cdot \rho_t^d \cdot I_{3/4} \left( \frac{d+1}{2}, \frac{1}{2} \right),
$$
where $3/4 = 1 - (1/2)^2$ accounts for the halfway-positioned hyperplane. The probability that a randomly sampled point is pruned in iteration $t$ is:
$$
\pi_t = \frac{V(D(t)) - V_{\text{inner}}}{V(B_d(\rho_0)) - V(B_d(\rho_t))}.
$$
Substituting the expressions above and simplifying using $V(B_d(1)) = \frac{\pi^{d/2}}{\Gamma(\frac{d}{2}+1)}$, we get:
$$
\pi_t = \frac{ I_{1 - (\frac{\rho_t}{2\rho_0})^2} - (\frac{\rho_t}{\rho_0})^d \cdot I_{3/4} }{2 \left(1 - (\frac{\rho_t}{\rho_0})^d\right)}.
$$
Since $\frac{\rho_t}{\rho_0} \in (0, 1)$ and $1 - (\frac{\rho_t}{2\rho_0})^2 \in (3/4, 1)$, the ratio is always decreases with respect to \(\frac{\rho_t}{\rho_0}\). Specifically, noting that $I_{3/4}(a, b)  < 1$, for large \(d\), we derive:
$$
\frac{1}{2} I_{3/4} \left( \frac{d+1}{2}, \frac{1}{2} \right) < \pi_t < \frac{1}{2}.
$$
\textbf{2-dimensional case.} In the two-dimensional case, we can leverage geometric relationships to derive a lower bound of $\frac{1}{3}$ for the pruning probability. This bound corresponds to the scenario where the pruning probability decreases as $\rho_t/\rho_0$ increases, reaching its minimum when $\rho_t / \rho_0 = 1$—that is, when the distance from the center point $p$ to its nearest neighbor $p^*$ approaches the radius of the disk. In this situation, the pruning decision boundary can be approximated by the perpendicular bisector of the segment connecting \(p\) and \(p^*\). Geometrically, this bisector divides the circular region into two sectors, and the prunable region corresponds to the sector opposite $p$. As the angle subtended by this region exceeds $\frac{2\pi}{3}$, it follows that the fraction of the total area being pruned is at least $\frac{1}{3}$. Thus, the pruning probability is bounded below by $\frac{1}{3}$ in the two-dimensional setting.
\end{proof}

\subsection{Conservative Probability Estimation}
\label{app:Applicability}
\textbf{Applicability of Pruning Probability to General Points.} Our probability calculations are specifically designed for the scenarios where the reference point is located at the center of the sphere, ensuring uniform density in all directions. Under the assumption of a uniform distribution, each point in the dataset can be locally approximated as having uniform density in its neighborhood. Consequently, the pruning probability derived in Lemma~\ref{lem:probability} extends naturally to arbitrary points in the dataset, justifying its applicability throughout the SNG construction process.

\textbf{Overlap.} We also note that in later iterations, the pruning region at step $t$ may overlap with those from earlier steps. This geometric overlap does not invalidate our degree bound—in fact, it tends to have faster elimination of candidates and thus requires fewer steps (i.e., fewer edges). In the proof of Lemma~\ref{lem:probability}, any point that lies within the overlapping regions of multiple pruning areas remains eligible for removal, effectively undergoing repeated pruning. Therefore, our earlier volume-based estimate, which assumes disjoint pruning regions and ignores such overlaps, is conservative.

\subsection{Proof of Theorem \ref{theorem:max_degree}}
\label{app:max_degree}
\begin{proof}[Proof of Theorem~\ref{theorem:max_degree}]
Let \(T\) denote the total number of pruning iterations in the SNG construction.
By Lemma~\ref{lemma:prop}(3), the maximum out-degree of the constructed SNG is
controlled by \(T\). Therefore, it suffices to prove that, for any fixed
\(\varepsilon>0\),
\[
T=O(n^{2/3+\varepsilon})
\]
with probability \(1\).

The proof has two main ingredients. The first is a uniform positive lower bound
on the per-iteration pruning probability:
\[
\pi_t\ge c_*>0.
\]
The second is the two-phase pruning analysis developed in
Sections~\ref{subsec:2-phase}--\ref{subsec:highdim}, which combines the
Chernoff bound, the Borel--Cantelli argument, and Wormald's differential
equation method (DEM). Since this two-phase analysis only requires the existence
of a positive lower bound on \(\pi_t\), the proof is insensitive to the specific
distributional model once such a constant \(c_*\) has been established. We
therefore first prove the result in the uniform \(\alpha=1\) setting, and then
extend the same argument to general \(\alpha>0\) and to bounded non-uniform
densities.

\medskip
\noindent\textbf{Case~1: uniform distribution and \(\alpha=1\).}
We first consider the base setting where the candidates are independently and
uniformly distributed in a \(d\)-dimensional ball \(B(p,\rho_0)\), and
\(\alpha=1\). In this case, the pruning condition for a candidate \(x\) with
respect to the selected nearest neighbor \(p_t^*\) is
\[
\|p-x\|\ge \|p_t^*-x\|.
\]
Equivalently, the pruned region is a spherical cap cut out by the perpendicular
bisector of the segment joining \(p\) and \(p_t^*\). By
Lemma~\ref{lem:probability}, the normalized volume of this region is bounded
below by a positive constant depending only on \(d\). Hence there exists
\[
c(d,1)>0
\]
such that, for every iteration \(t\),
\[
\pi_t\ge c(d,1).
\]

Let \(S'_t\) denote the set of processed points after \(t\) iterations, and set
\[
Y_t:=|S'_t|.
\]
Since the unprocessed candidate set has size
\[
|S_t|=n-1-Y_t,
\]
the one-step increment satisfies
\[
Y_{t+1}-Y_t=1+\Delta S_{t+1},
\]
where the term \(1\) corresponds to the selected nearest candidate, and
\(\Delta S_{t+1}\) is the number of additionally pruned candidates. Conditional
on the natural filtration \(\mathcal{F}_t\), the conditional expectation is
bounded below by
\[
\mathbb{E}[Y_{t+1}-Y_t\mid \mathcal{F}_t]
=
1+\mathbb{E}[\Delta S_{t+1}\mid \mathcal{F}_t]
\ge
1+c(d,1)(n-1-Y_t).
\]
After normalization \(Z_t:=Y_t/n\), this becomes
\[
\mathbb{E}[Z_{t+1}-Z_t\mid \mathcal{F}_t]
\ge
\frac{1+c(d,1)(n-1-nZ_t)}{n}.
\]
Thus the associated deterministic drift is, up to lower-order terms,
\[
F(z)=1+c(d,1)(1-z).
\]

We now split the process into two phases. Fix any
\(\nu\in(2/3,1)\). The first phase lasts until the processed set reaches size
\(n-n^{1-\nu}\), and the second phase covers the remaining iterations:
\[
\text{Phase I: } |S'_t|<n-n^{1-\nu},
\qquad
\text{Phase II: } |S'_t|\ge n-n^{1-\nu}.
\]

\smallskip
\noindent\emph{First phase.}
By Lemma~\ref{lem:sublinear}(ii), for any \(\nu\in(0,1)\), the SNG construction
reaches the \((n-n^{1-\nu})\)-level within
\[
t_1=O(n^\nu)
\]
iterations with probability \(1\). In other words, after \(O(n^\nu)\)
iterations, all but at most \(n^{1-\nu}\) candidates have already been processed
or pruned.

\smallskip
\noindent\emph{Second phase.}
It remains to bound the number of iterations needed after \(t_1\). In this
plateau phase, the remaining candidate set satisfies
\[
|S_t|=n-1-|S'_t|\le n^{1-\nu}.
\]
Let
\[
Y(i):=|S'_{t_1+i}|
\]
be the processed-set size \(i\) steps after the start of the second phase. We
verify the three DEM conditions for this residual process.

\smallskip
\noindent\emph{Boundedness.}
For every \(i\ge 0\),
\[
|Y(i+1)-Y(i)|
=
1+\Delta S_{t_1+i+1}
\le |S_{t_1+i}|
\le n^{1-\nu}.
\]
Thus the one-step change is deterministically bounded by
\[
\beta:=n^{1-\nu},
\]
with exception probability \(\gamma=0\).

\smallskip
\noindent\emph{Trend.}
The conditional expectation of one increment is
\[
\mathbb{E}[Y(i+1)-Y(i)\mid \mathcal{F}_i]
=
1+\pi_{t_1+i+1}\bigl(|S_{t_1+i}|-1\bigr).
\]
Writing \(z=Y(i)/n\), we have
\[
|S_{t_1+i}|=n-1-Y(i)=n(1-z)-1.
\]
Using the lower bound \(\pi_t\ge c(d,1)\), the normalized drift is bounded below
by
\[
\mathbb{E}\!\left[\frac{Y(i+1)-Y(i)}{n}\,\middle|\,\mathcal{F}_i\right]
\ge
\frac{1+c(d,1)(n(1-z)-2)}{n}.
\]
Equivalently, up to an \(o(1)/n\) error, the drift is governed by
\[
F(z)=1+c(d,1)(1-z).
\]
That is,
\[
\mathbb{E}[Z_{i+1}-Z_i\mid\mathcal{F}_i]
=
\frac{F(Z_i)+o(1)}{n}.
\]

\smallskip
\noindent\emph{Lipschitz continuity.}
For any \(z_1,z_2\in[0,1]\),
\[
|F(z_1)-F(z_2)|
=
c(d,1)|z_1-z_2|.
\]
Hence \(F\) is Lipschitz on the relevant domain.

Therefore, Wormald's differential equation method applies. The corresponding
differential equation is
\[
\frac{dz}{dx}=1+c(d,1)(1-z),
\qquad
z(0)=1-n^{-\nu},
\]
where \(x=i/n\) is the rescaled time. Solving this ODE gives
\[
z(x)
=
1+\frac{1}{c(d,1)}
-
\left(\frac{1}{c(d,1)}+n^{-\nu}\right)e^{-c(d,1)x}.
\]
To estimate the time required to reach \(z=1-1/n\), set
\[
z(x)=1-\frac{1}{n}.
\]
Then
\[
1-\frac{1}{n}
=
1+\frac{1}{c(d,1)}
-
\left(\frac{1}{c(d,1)}+n^{-\nu}\right)e^{-c(d,1)x},
\]
which implies
\[
e^{-c(d,1)x}
=
\frac{\frac{1}{c(d,1)}+\frac{1}{n}}
     {\frac{1}{c(d,1)}+n^{-\nu}}.
\]
Thus
\[
x
=
\frac{1}{c(d,1)}
\log\!\left(
\frac{\frac{1}{c(d,1)}+n^{-\nu}}
     {\frac{1}{c(d,1)}+\frac{1}{n}}
\right).
\]
Since \(n^{-\nu}\to 0\), using \(\log(1+u)=O(u)\) for small \(u\), we obtain
\[
x=O(n^{-\nu}).
\]
Because the actual number of iterations is \(i=nx\), the second phase takes
\[
T_2=O(n^{1-\nu})
\]
iterations.

The DEM approximation holds with probability tending to one. Combined with the
Borel--Cantelli argument used in Lemma~\ref{lem:sublinear}, the above two-phase
bound holds with probability \(1\). Consequently,
\[
T=t_1+T_2
=
O(n^\nu)+O(n^{1-\nu})
=
O(n^\nu).
\]
Since \(\nu\in(2/3,1)\) is arbitrary, taking
\[
\nu=2/3+\varepsilon
\]
yields
\[
T=O(n^{2/3+\varepsilon})
\]
with probability \(1\).

\medskip
\noindent\textbf{Case~2: uniform distribution and general \(\alpha>0\).}
We now consider the uniform model with a general fixed relaxation parameter
\(\alpha>0\). The pruning condition becomes
\[
\|p-x\|\ge \alpha\|p_t^*-x\|.
\]
For \(\alpha=1\), this reduces to the spherical-cap region in Case~1. For
\(\alpha\ne 1\), the boundary
\[
\|p-x\|=\alpha\|p_t^*-x\|
\]
is an Apollonius sphere. Hence the pruning region is described by the
intersection of the current support with the corresponding Apollonius-type
blocking region.

Although the geometry differs from the hyperplane cut in the \(\alpha=1\) case,
the crucial property remains the same: for fixed \(d\) and fixed
\(\alpha>0\), the normalized volume of the blocking region is bounded below by
a positive constant. Therefore, there exists
\[
c(d,\alpha)>0
\]
such that
\[
\pi_t\ge c(d,\alpha)
\]
uniformly over the iterations.

\textbf{The two-phase proof in Case~1 uses the geometry only through the positive
constant appearing in the drift lower bound. Hence, after establishing the
new lower bound \(c(d,\alpha)\), all remaining steps are identical to those in
Case~1, with \(c(d,1)\) replaced by \(c(d,\alpha)\).} The drift function becomes
\[
F_\alpha(z)=1+c(d,\alpha)(1-z).
\]
The boundedness condition is unchanged, the trend condition follows from
\(\pi_t\ge c(d,\alpha)\), and the Lipschitz condition holds because
\[
|F_\alpha(z_1)-F_\alpha(z_2)|
=
c(d,\alpha)|z_1-z_2|.
\]
Therefore, by the same Chernoff, Borel--Cantelli, and DEM argument,
\[
T=O(n^{2/3+\varepsilon})
\]
with probability \(1\) for every fixed \(\alpha>0\) under the uniform
distribution model.

\medskip
\noindent\textbf{Case~3: bounded non-uniform density.}
Finally, suppose that the local candidate distribution is not uniform but has a
density \(f\) on the current support \(\Omega_{t-1}\), where
\[
0<m\le f(x)\le M<\infty
\]
for all \(x\in\Omega_{t-1}\). Let \(R_t(p_t^*)\) denote the blocking region
defined by
\[
R_t(p_t^*)
:=
\{x\in \Omega_{t-1}:\|p-x\|\ge \alpha\|p_t^*-x\|\}.
\]
Conditional on \(p_t^*\), the pruning probability under the density \(f\) is
\[
\pi_t(\mu\mid p_t^*)
=
\frac{\int_{R_t(p_t^*)} f(x)\,dx}
     {\int_{\Omega_{t-1}} f(x)\,dx}.
\]
Using \(m\le f\le M\), we obtain
\[
\pi_t(\mu\mid p_t^*)
\ge
\frac{m\,\operatorname{Vol}(R_t(p_t^*))}
     {M\,\operatorname{Vol}(\Omega_{t-1})}
=
\frac{m}{M}
\frac{\operatorname{Vol}(R_t(p_t^*))}
     {\operatorname{Vol}(\Omega_{t-1})}.
\]
By the uniform-volume lower bound from Case~2,
\[
\frac{\operatorname{Vol}(R_t(p_t^*))}
     {\operatorname{Vol}(\Omega_{t-1})}
\ge c(d,\alpha).
\]
Hence
\[
\pi_t(\mu\mid p_t^*)
\ge
\frac{m}{M}c(d,\alpha)
=:c_*>0.
\]

Thus the bounded non-uniform density changes only the constant in the pruning
probability lower bound. The drift function becomes
\[
F_*(z)=1+c_*(1-z).
\]
As before, the boundedness condition is unchanged, the trend condition follows
from \(\pi_t\ge c_*\), and \(F_*\) is Lipschitz since
\[
|F_*(z_1)-F_*(z_2)|
=
c_*|z_1-z_2|.
\]
Therefore the same two-phase pruning argument applies. We again obtain
\[
T=O(n^{2/3+\varepsilon})
\]
with probability \(1\).

Combining the three cases, the total number of pruning iterations satisfies
\[
T=O(n^{2/3+\varepsilon})
\]
with probability \(1\) under the stated assumptions. 
\end{proof}

\subsection{Full Proof to Theorem \ref{theorem:path}}
\label{app:path}
\begin{proof}[Proof of Theorem \ref{theorem:path}]
We analyze the expected length \(k\) of a \textit{GreedySearch} path from \(p\) to the query \(v_k\) \( \{p=v_0, v_1, \ldots, v_k\} \), where each node selects its neighbor closest to the query point. Let \(\eta\) denote the out-degree (number of neighbors) of \(v_k\), and assume that points in \(P\) are independently and uniformly distributed in a \(d\)-dimensional ball of radius \(\rho_0\).
\begin{figure}[H]
    \centering
    \includegraphics[width=0.3\linewidth]{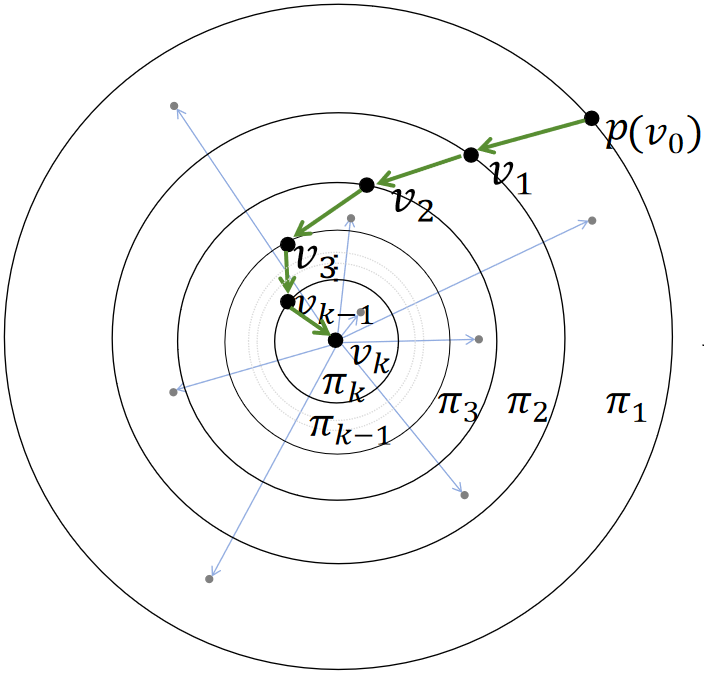}
    \caption{Path and neighbor distribution diagram: Blue lines represent edges from \( v_k \) to its neighbors, and green lines represent the search path. \( \pi_i \) denotes the number of neighbors in the \( i \)-th annular layer from the outermost to the innermost.}
    \label{fig:neighbor_dist}
\end{figure}
Define \(\triangle r\) as the minimum distinguishable pairwise distance difference among all point triples: 
\begin{align*}
\triangle r = \min \{ |\|a - b\| - \|a - c\||, |\|a - b\| - \|b - c\||, |\|a - c\| - \|b - c\||
\\ \mid a, b, c \in P, \text{distinct} \}    
\end{align*}
as introduced in \cite{NSG19}.
Let \(v_k\) be the final node of the search path. We construct concentric balls centered at \(v_k\) with radii \(\|v_k-v_i\|\) for \(i=0,1,...,k-1\), and define the annular layer  \(A_i= B(v_k, \|v_k-v_{i-1}\|) \setminus B(v_k, \|v_k - v_i\|) \). Let \(\eta_i\) be the number of neighbors in each layer \(A_i\). Then \(\eta_i\sim \text{Bin}\left( \eta, \frac{V(A_i)}{V(B(v_k, \rho_0))} \right)\), and the expected number in each layer is proportional to the volume difference:
\[
\mathbb{E}[\eta_i] = \eta \cdot \frac{V(B(v_k, \|v_k - v_{i-1}\|)) - V(B(v_k, \|v_k - v_i\|))}{V(B(v_k, \rho_0))}.
\]
 Summing over all \(k\) layers, we have:
 \begin{align}
 E\left[ \sum_{i=1}^k \eta \cdot \frac{V(B(v_k, \|v_k - v_{i-1}\|)) - V(B(v_k, \|v_k - v_i\|))}{V(B(v_k, R))} \right] \\\leq n \cdot \frac{V(B(v_k, \|p - v_k\|))}{V(B(v_k, R))}.    
 \end{align}

This upper bound is based on the fact that all points in the ball may serve as neighbors. Using $V(B(v\_k, r)) \propto r^d$, we obtain:
\begin{equation}
\label{equa:vol}
    E\left[ \sum_{i=1}^k \eta \cdot \frac{\|v_k - v_{i-1}\|^d - \|v_k - v_i\|^d}{\rho_0^d} \right] \leq n \cdot E\left[ \frac{\|p - v_k\|^d}{\rho_0^d} \right].
\end{equation}
 Now, note that
 \begin{align}
&\|v_k - v_{i-1}\|^d - \|v_k - v_i\|^d 
\\&= (\|v_k - v_{i-1}\| - \|v_k - v_i\|) \sum_{j=0}^{d-1} \|v_k - v_{i-1}\|^j \|v_k - v_i\|^{d-1-j}.     
 \end{align}

Since $\|v_k - v_{i-1}\| - \|v_k - v_i\| \geq \triangle r$ and $\|v_k - v_i\| \geq \|v_k - v_{k-1}\|$ for all $i$, we have:
$$
\|v_k - v_{i-1}\|^d - \|v_k - v_i\|^d \geq d \cdot \triangle r \cdot \|v_k - v_{k-1}\|^{d-1}.
$$
Substituting into Equation~\ref{equa:vol}, we obtain:
$$
\mathbb{E}[k \cdot \eta \cdot d \cdot \triangle r \cdot \|v_k - v_{k-1}\|^{d-1}] \leq n \cdot \mathbb{E}[\|p - v_k\|^d].
$$
Assuming independence between path length and distances, we rearrange:
\begin{equation}
\label{eq:keta}
E[k \cdot \eta] \leq n \cdot \frac{E[\|p - v_k\|^d]}{\triangle r \cdot d\cdot E[\|v_k - v_{k-1}\|^{d-1}]}.
\end{equation}
Next, we upper bound $\|p - v_k\|$. Since the search path is monotonic:
\begin{equation}
\|p - v_k\| = \|v_0 - v_k\| \leq \|v_0 - v_k\| \cdot \prod_{i=1}^{k-1} \frac{\|v_i - v_k\|}{\|v_{i-1} - v_k\|} \cdot \frac{\|p - v_k\|}{\|v_{k-1} - v_k\|}.    
\end{equation}
Because $\frac{\|v_i - v_k\|}{\|v_{i-1} - v_k\|} \leq \frac{\rho_0 - \triangle r}{\rho_0}$ and $\frac{\|p - v_k\|}{\|v_{k-1} - v_k\|} \leq \frac{\rho_0}{\triangle r}$, we have:
\begin{equation}
\|p - v_k\|^d \leq \frac{\rho_0^{2d}}{\triangle r^d} \cdot \left( \frac{\rho_0 - \triangle r}{\rho_0} \right)^{d(k-1)}.    
\end{equation}
Substituting this into Equation~\ref{eq:keta}, we obtain:
\begin{equation}
 \mathbb{E}[k \cdot \eta] \leq \frac{n \cdot \rho_0^{2d} \cdot \mathbb{E}\left[ \left( \frac{\rho_0 - \triangle r}{\rho_0} \right)^{d(k-1)} \right]}{d \cdot \triangle r^{d+1} \cdot \mathbb{E}[\|v_k - v_{k-1}\|^{d-1}]}.   
\end{equation}
Using the bound $\mathbb{E}[\|v_k - v_{k-1}\|] \geq \frac{\rho_0}{(n+1)^{1/d}}$ from \cite{Dist2010} and Jensen’s inequality(Lemma~\ref{lem:Jensen}), we find:
\begin{equation}
\mathbb{E}[\|v_k - v_{k-1}\|^{d-1}] \geq \left( \frac{\rho_0}{(n+1)^{1/d}} \right)^{d-1}.    
\end{equation}

Thus:
\begin{align}
    E[k \cdot \eta] &\leq \frac{n (n+1)^{(d-1)/d} \rho_0^{d+1} \cdot E\left[ \left( \frac{\rho_0 - \triangle r}{\rho_0} \right)^{d \cdot (k-1)} \right]}{\triangle r^{d+1} \cdot d}\\
&\leq \frac{n (n+1)^{(d-1)/d} \rho_0^{d+1} \cdot E\left[ -\left( \frac{\rho_0 - \triangle r}{\rho_0} \right)^{d \cdot (k-1)} +2\right]}{\triangle r^{d+1} \cdot d}.
\end{align}
Applying Jensen’s inequality again:
\begin{equation}
 E[k \cdot \eta] \leq \frac{n (n+1)^{(d-1)/d} \rho_0^{d+1} \cdot (-\left( \frac{\rho_0 - \triangle r}{\rho_0} \right)^{d \cdot \mathbb{E}[k-1]}+2)}{\triangle r^{d+1} \cdot d}   
\end{equation}
Define the function:
\begin{equation}
g(x) := \eta \cdot x - \frac{n(n+1)^{(d-1)/d} \rho_0^{d+1}}{d \cdot \triangle r^{d+1}} \left[ -\left( \frac{\rho_0 - \triangle r}{\rho_0} \right)^{d(x-1)} + 2 \right].    
\end{equation}
We have \(E(k)<0\). Evaluating:
\[
g(0) < 0,
\]
\begin{equation}
g'(x) = \eta + \frac{n (n+1)^{(d-1)/d} \rho_0^{d+1}}{\triangle r^{d+1}\cdot d} \cdot d \cdot \ln \left( \frac{R}{R - \triangle r} \right) \cdot \left( \frac{R - \triangle r}{R} \right)^{d \cdot x} > 0,    
\end{equation}
\begin{equation}
g(\log n) = \eta \log n + \frac{n (n+1)^{(d-1)/d} \rho_0^{d+1}}{\triangle r^{d+1} \cdot d} \left[  \left( \frac{R - \triangle r}{R} \right)^{d \cdot (\log n-1)} -2\right].    
\end{equation}
we have \( \left( \frac{\rho_0 - \triangle r}{\rho_0} \right)^{d \cdot \log n} = e^{d \cdot \log n \cdot \ln \frac{\rho_0 - \triangle r}{\rho_0}} = n^{C'\cdot d\cdot \ln \frac{\rho_0 - \triangle r}{\rho_0}} >2 \). 
Thus:
\[
g(\log n) > 0.
\]
Since \( g(x) \) is monotonically increasing and transitions from negative to positive, we conclude that \(g(\log n)>0\) and \(g(E(k))<0\). Hence,\(E(k)<\log n\), leading to:
\[
E[k] = O(\log n).
\]
\end{proof}

\section{Complexity Analysis Details}
\label{app:complexity}
Below is a detailed complexity analysis of each component in the SNG construction algorithm.
\noindent\textbf{Complexity of GreedySearch.}
The GreedySearch algorithm iteratively selects the closest neighbor to the query and follows a path to an approximate nearest neighbor, evaluating all neighbors at each step. According to \cite{NSG19}, the complexity of GreedySearch is determined by the product of the average path length and the average degree. Theorem \ref{theorem:path} establishes that the path length is \( O(\log n) \). Thus, for a graph with degree \( K \), the search complexity is \( O(K \cdot \log n) \).

\noindent\textbf{Complexity of Pruning.}
In the SNG pruning algorithm, the outer loop checks if the candidate set \(S\) is empty. If not, it adds an edge to the current nearest neighbor. The inner loop evaluates all remaining points in \(S\), determining whether to prune them based on the pruning rule. In the non-truncated SNG pruning process, the outer loop iterates at most \( O(n^{2/3 + \epsilon}) \) times, as established by Theorem \ref{theorem:max_degree}. The inner loop evaluates all points in the candidate set \( S \), excluding the nearest neighbor, checking up to \( n-2 \) points per iteration. This yields a complexity upper bound of \( O(n^{5/3 + \epsilon}) \), tighter than the \( O(n^2) \) estimate in \cite{NEURIPS2019}. For the truncated variant, SNG-Prune, the outer loop runs at most \( R \) times. The inner loop evaluates up to \( n-2 \) points, resulting in a worst-case complexity of \( O(R \cdot n) \).
 
\textbf{Total Complexity of Construction od SNG.}  To begin with, the graph is randomly initialized as an $R$-regular graph. $R$ is the maximum degree specified by the parameter sweep. To reduce the indexing time, for each point $p$ in the dataset, GreedySearch is first performed starting from the centroid $s$ to obtain the set of volunteer points, which are then used as candidate points for pruning, the complexity can be approximate by $O(R \cdot logn)$ and written as $C_1\cdot R\log n+b_1$ \cite{NSG19}. Next, we apply SNG-tPrune to prune the randomized graph. This process runs for at most $R$ iterations, and in each iteration, the number of points to be checked is approximately $O(R \log n)$. Considering the parameter $\alpha$, as $\alpha$ increases, the graph has higher degree, hence the complexity can be regarded as inversely proportional to $\alpha$. Therefore, the algorithm's complexity is estimated to be $O(R \cdot R \log n / \alpha)$ and can be expressed as $C_2(R \cdot R \log n / \alpha) + b_2$. To enhance GreedySearch convergence during the search phase, we add reverse edges for each neighbor of point \( p \). This may cause a degree overrun beyond \( R \). In such cases, SNG-tPrune is reapplied. For each point \( p \) with at most \( R \) possible neighbors, we add reverse edges and reconstruct the graph. The candidate set size does not exceed \( R + 1 \), and the inner loop iterates at most \( R \) times. A larger \(\alpha\) increases the number of neighbors, raising the probability of a degree overrun, which we model as a proportional relationship, so the complexity is expressed as \(C_3(\alpha \cdot R \cdot R^2) + b_3\).

\section{Discussion}
\label{app:discussion}
\textbf{Point Processes and Uniformity: A New Perspective on Assumptions.}
Dataset point distributions can often be modeled by a \textit{spatial point process}, which is a random process used to describe the distribution of points in a given space. Formally, a spatial point process is a measurable mapping from a probability space into the space of locally finite point configurations in $\mathbb{R}^d$.

As elaborated in~\cite{ICML2020}, a widely used model is the \textit{Poisson point process}, which assumes that the number of points in any bounded region follows a Poisson distribution and that point locations are independent and identically distributed. Crucially, given a fixed number of points within a bounded region, the conditional distribution of those points under a Poisson point process is uniform. This property provides a natural justification for analyzing nearest neighbor graph constructions under the uniform distribution assumption, as the conditioning step aligns with the typical setup in ANNS frameworks such as DiskANN and SNG.

\textbf{Future Research Directions.} An important avenue for future research is to theoretically characterize the behavior of the SNG algorithm under diverse data distributions. Such analysis is crucial to ensure the robustness and effectiveness of SNG across datasets with varying structural properties. Advancing this line of work would further enhance the adaptability and generalizability of the proposed probability method, making it more suitable for real-world applications involving complex, non-uniform, and multimodal data distributions.

\end{document}